\newcommand*\dagg{^{\dagger}}
\newcommand{\D}{\mathcal{D}}
\renewcommand{\P}{\mathcal{P}}
\newcommand{\R}{\mathcal{R}}
\newcommand{\OO}{\mathcal{O}}
\newcommand{\eps}{\varepsilon}
\newcommand{\Eps}{\mathcal E}
\newcommand*{\ket}[1]{|#1\rangle}
\newcommand*{\bra}[1]{\langle#1|}
\newcommand*{\mat}[1]{\begin{pmatrix}#1\end{pmatrix}}
\newcommand*{\floor}[1]{\lfloor#1\rfloor}
\newcommand{\ce}{=}
\newcommand{\Nto}{\xrightarrow[N\to\infty]{}}
\newcommand{\Pcont}{\bar P}
\newcommand{\PPcont}{\mathcal P^{\mathrm{cont}}}
\newcommand{\Pex}{P^*}
\newcommand{\PPex}{\mathcal P^*}
\newcommand{\PPap}{\tilde{\mathcal P}}
\newcommand{\rhonr}{\mathbb P}
\newcommand{\tildeN}{N-2r}
\newcommand{\Q}{\mathcal Q}
\newtheorem{theorem}{Theorem}
\newtheorem{lemma}{Lemma}
\newtheorem{corollary}{Corollary}
\newtheorem*{L2}{Corollary~\ref{lm:mag}}
\newtheorem*{T1}{Theorem~\ref{th:pure}}
\newtheorem*{T2}{Theorem~\ref{th:loss}}
\newtheorem{definition}{Definition}
\begin{document}
\title{Large-$N$ limit of Dicke superradiance}
\author{Daniel Malz}
\author{Rahul Trivedi}
\author{Ignacio Cirac}
\affiliation{Max Planck Institute for Quantum Optics, Hans-Kopfermann-Straße 1, D-85748 Garching, Germany}
\affiliation{Munich Center for Quantum Science and Technology, Schellingstraße 4, D-80799 München, Germany}

\begin{abstract}
  We investigate the thermodynamic limit of Dicke superradiance.
  We find an expression for the system's density matrix that we can prove is exact in the limit of large atom numbers $N$. This is in contrast to previously known solutions whose accuracy has only been established numerically and that are valid only for a range of times.
  We also introduce an asymptotically exact solution when the system is subject to additional incoherent decay of excitations as this is a common occurrence in experiment. 
\end{abstract}
\maketitle

\section{Introduction}
Superradiance was introduced by Dicke in 1954~\cite{Dicke1954}. 
In a simple model of all-to-all coupled emitters, he showed that during their decay process, quantum coherence is established spontaneously and leads to a superradiant burst of photons in which the maximum decay rate scales as the square of the number of atoms $\Gamma_{\mathrm{max}}\sim \gamma_0N^2/4$, where $\gamma_0$ is the decay rate of an individual atom.
Superradiance can be construed as synchronization of many emitters~\cite{Zhu2015}, akin to the onset of lasing. Moreover, since quantum effects are typically difficult to observe in large ensembles of atoms, Dicke's observation stirred a lot of interest in the 70s~\cite{Bonifacio1971,Bonifacio1971a,Degiorgio1971a,Degiorgio1971,Haake1972,Agarwal1974,Narducci1974a}, as reviewed by Gross and Haroche~\cite{Gross1982}.
Superradiance was observed experimentally, also in the 70s~\cite{Skribanowitz1973,Gross1976,Vrehen1977,Raimond1982}.
A closely related, but different, phenomenon is the Dicke phase transition~\cite{Hepp1973,Wang1973}, which we will not consider here.

Recent experimental advances to produce arrays of quantum emitters in optical lattices~\cite{Bakr2010,Sherson2010,Greif2016,Kumar2018} have led to a revival in interest in their radiation properties~\cite{Porras2008,Svidzinsky2010,Jenkins2016,Bettles2016,Asenjo-Garcia2017,Chang2018,Lemberger2021,Masson2020,Masson2021,Sierra2021,Glicenstein2021,Gold2021,Ferioli2021}.
This is also motivated by technological applications of superradiance (and subradiance) as a way to improve light--matter interfaces, which play a central role in many quantum technology platforms~\cite{Fleischhauer2005,Hammerer2010,Peyronel2012}.
Several experiments have probed radiance in the linear regime~\cite{Rohlsberger2010,Roof2016,Guerin2016,Araujo2016,Kim2018,Pennetta2021,Pennetta2021a}.
Another possibility is raised by the emergence of waveguide QED~\cite{LeKien2005,LeKien2008,Goban2014,Turschmann2019},
where quantum emitters are coupled to one-dimensional optical fields, with diverse implementations ranging from cold atoms near waveguides~\cite{Vetsch2010,Goban2012,Thompson2013,Goban2015,Gouraud2015}, quantum dots~\cite{Akimov2007,Lodahl2015}, or nitrogen-vacancy centres~\cite{Huck2011,Sipahigil2016,Evans2018}.
Strong confinement of light in the waveguide allows for substantial cooperativities (ratio of waveguide to free-space decay rates).
In the absence of propagation losses, the one-dimensional nature of the light field gives rise to effectively infinite-range dissipative interactions. Moreover, if the atoms are spaced by the wavelength of their dipole transition $\lambda$ (the ``atomic-mirror configuration'') their dynamics is described by the original Dicke model.
Superradiance has also been explored for technological applications such low-linewidth lasers~\cite{Meiser2009,Bohnet2012}, metrology~\cite{Paulisch2019}, or sensing~\cite{Yang2021}.

In light of these developments, we re-visit the theory of superradiance.
Approximate solutions have been derived in a number of different ways~\cite{Degiorgio1971,Degiorgio1971a,Haake1972,Agarwal1974,Narducci1974a,Gross1982},
but ultimately they all rely on a continuum limit in which the magnetization can take non-integer values.
This elegant description is, however, incorrect at short and long times. Additionally, it sensitively depends on a chosen initial distribution, and thus needs a rigorous justification.
Since Dicke superradiance is a fundamental model of quantum optics, it is remarkable that no solution exact in the limit $N\to\infty$ has been derived so far.

The main contribution of the present work is an explicit formula for the density matrix as a function of time, alongside a proof that it is asymptotically exact as $N$ becomes large. In contrast to previous work, our solution works for all times.
In a second contribution, motivated by experimental realities, we derive the solution in the presence of incoherent loss (for example through decay into free space rather than guided modes) and show that it is exact as $N\to\infty$.

\section{Setup}
We consider the quantum master equation of $N$ two-level systems (``atoms'' or ``emitters'') with states $\ket0$, $\ket1$ that are subject to both collective and incoherent decay,
\begin{equation}
  \dot\rho(t)=\D[S^-]\rho(t) + \gamma \sum_{n=1}^N\D[\sigma_n^-]\rho(t),
  \label{eq:QME}
\end{equation}
where $\sigma_n^-=\ket{0}_n\bra1$, $S^-=\sum_m\sigma_m^-$, and $\D[a]\rho=a\rho a\dagg-(1/2)(a\dagg a\rho + \rho a\dagg a)$.
Note that we have omitted the transition frequency of the atoms from the description, as it does not affect they decay dynamics.
In \cref{eq:QME}, the dimensionless parameter $\gamma$ controls the ratio of incoherent to collective decay.
For $\gamma=0$, the model originally studied by Dicke is recovered (to be distinguished from the ``Dicke model'' introduced by Hepp and Lieb~\cite{Hepp1973}).

Any permutation of the atoms is a symmetry of \cref{eq:QME}.
Thus, a permutation-invariant initial state remains so, i.e., $\rho(t)=\Pi_s[\rho(t)]$, where $\Pi$ is the permutation operator and $s$ is any permutation of the atoms.
Defining the $z$-component of the collective angular momentum $S^z=\sum_n\sigma_n^z$ and $S^2=S^+S^-+(S^z)^2$, we can label collective states of $N$ spin-$1/2$ particles according to their total angular momentum $j$ and its $z$-component $m$.
The Hilbert space of $N$ two-level systems is then spanned by states $\{\ket{j,m,\alpha}\}$, which obey $S^2\ket{j,m,\alpha}=j(j+1)\ket{j,m,\alpha}$ and $S^z\ket{j,m,\alpha}=m\ket{j,m,\alpha}$, where
$0\leq\alpha< d_j$ is an integer accounting for their multiplicity $d_j=N!(2j+1)/((N/2-j)!(N/2+j+1)!)$.

We would like to solve \cref{eq:QME} when the system is initialized in the (permutation-invariant) state corresponding to fully excited atoms $\rho(0)=\ket{1\cdots1}\bra{1\cdots1}$.

\subsection{Dicke model}
The original Dicke model is obtained by setting $\gamma =0$~\cite{Dicke1954}, and we call this situation ``pure superradiance''.
In this case, the system explores only the Dicke states with maximum total spin $\ket n\ce\ket{N/2,-N/2+n}$, where $n\in\{0,1,\cdots, N\}$.
The density matrix of the system is diagonal at all times, $\rho(\tau) =\sum_n P_n(\tau)\ket{n}\bra{n}$, and thus \cref{eq:QME} reduces to a rate equation
\begin{equation}
  \partial_\tau P_n(\tau) = -\gamma_nP_n(\tau)+\gamma_{n+1}P_{n+1}(\tau).
  \label{eq:p_n}
\end{equation}
Here, for convenience we rescaled time $\tau\ce Nt$, and defined
\begin{equation}
  \gamma_n=n(N-n+1)/N.
  \label{eq:pure_rates}
\end{equation}
The most salient feature of this model is the superradiant burst that occurs at $\tau\approx\ln N$ during which most excitations are emitted at a rate that scales with $N^2$.

\subsection{Dicke model with incoherent decay}
Free-space decay competes with superradiance, as it reduces the coherence of the ensemble and allows the system to explore states with lower total spin.
For convenience, we label these states by their number of ``dark'' excitations $r=N/2-j$, which measures by how much the total spin has been reduced, and the number of collective excitations $n=N/2-2r+m$ that can be removed from the state before it reaches the bottom of the Dicke ladder.
In terms of $n$ and $r$, the relevant states are projectors into the space of defined $n$ and $r$,
\begin{equation}
  \rhonr_{n,r} = \frac{1}{d_r}\sum_{\alpha=0}^{d_r-1}\left|\frac N2-r,r+n-\frac N2, \alpha\right\rangle\left\langle\frac N2-r,r+n-\frac N2, \alpha\right|,
  \label{eq:rho_nr}
\end{equation}
where in terms of $r$ the multiplicity reads $d_r=N!(N-2r+1)/(r!(N-r+1)!)$.
As before, \cref{eq:QME} reduces to a rate equation transitioning between these states, and the solution is a mixture $\rho(\tau)=\sum_{n,r}\P_{n,r}(\tau)\rhonr_{n,r}$~\footnote{We take the sum $\sum_{n,r}$ to run over all allowed values of $n$ and $r$, which are $r\in\{0\cdots\floor{N/2}\}$ and $n\in\{0,\cdots,N-2r\}$.},
  where the probabilities $\P_{n,r}$ obey
\begin{equation}
  \partial_\tau \P_{n,r} = -\Gamma^{(1)}_{n,r}\P_{n,r}+\Gamma^{(2)}_{n+1,r}\P_{n+1,r}+\Gamma^{(3)}_{n+2,r-1}\P_{n+2,r-1}+\Gamma^{(4)}_{n,r+1}\P_{n,r+1},
  \label{eq:pnr}
\end{equation}
and the rates are given through~\cite{Shammah2018}
\begin{subequations}
  \begin{align}
  	\Gamma^{(1)}_{n,r} &= \frac{n}{N}(\tildeN -n+1)+\frac{\gamma }{N}(n+r),\\
  	\Gamma^{(2)}_{n,r} &= \frac{n}{N}(\tildeN -n+1)+\frac{\gamma }{N}\frac{n(N+2)(\tildeN -n+1)}{(N-2r) (\tildeN +2)},\\
  	\Gamma^{(3)}_{n,r} &= \frac{\gamma }{N}\frac{n(n-1)(\tildeN +r+1)}{(N-2r)(\tildeN +1)},\\
  	\Gamma^{(4)}_{n,r} &= \frac{\gamma }{N}\frac{(\tildeN -n+1)(\tildeN -n+2)r}{(\tildeN +2)(\tildeN +1)}.
  \end{align}
  \label{eq:rates_main}
\end{subequations}
Here and in the following, we distinguish distributions in $n,r$ by using calligraphic font.

Note that collective decay (terms without $\gamma$) cannot change the total spin $j$ (and neither $r$), and its rate is reduced by the presence of dark excitations $r$.
All other terms, proportional to $\gamma$, are due to incoherent loss, which can either reduce $r$ ($\Gamma^{(4)}$), increase $r$ ($\Gamma^{(3)}$) or leave it unchanged ($\Gamma^{(2)}$).

\subsection{Previous work}
Several approaches have been used to describe pure Dicke superradiance. 
In principle, one can write down an exact iterative solution~\cite{Agarwal1970}, but it cannot be summed up in closed form to yield a formula for the magnetization for general $N$.
Approximate solutions for $P_n(t)$ were derived in different ways by Degiorgio~\cite{Degiorgio1971a}, Degiorgio and Ghielmetti~\cite{Degiorgio1971}, and Haake and Glauber~\cite{Haake1972}. All these solutions are equivalent and read 
\begin{equation}
  \Pcont_n(\tau)=\frac{N^2}{n^2}\exp\left[ -\tau-e^{-\tau}N(N-n+\lambda)/n \right].
  \label{eq:Pcont}
\end{equation}
with $\lambda=1$.
Gross and Haroche later provided another derivation, but quote $\Pcont$ with $\lambda=0$~\cite{Gross1982}.
The choice of $\lambda$ has little effect and in the following we take $\lambda=0$.
All approaches rely on the continuum limit of \cref{eq:p_n} (dropping the 1 in $\gamma_n$ to be consistent with the choice $\lambda=0$),
\begin{equation}
  \partial_\tau \Pcont_n(\tau)=\partial_n[n(N-n)\Pcont_n(\tau)]/N,
  \label{eq:pn_continuum}
\end{equation}
where $n$ is interpreted as a continuous variable $n\in[0,N]$.
From \cref{eq:Pcont}, Degiorgio calculated the average magnetization $\mu(\tau)=1/2+\langle S_z\rangle/N $ and radiance $\rho(\tau)=\langle S^+S^-\rangle/N^2 $ as functions of time
\begin{subequations}
  \begin{align}
  	\label{eq:mu_and_rho}
  	\mu(\tau) &= z e^z H(z), \quad
  	\rho(\tau) = z(\tau)-(1+z(\tau))\mu(z(\tau)),\\
  	H(z)&\ce\int_{z}^\infty[y e^y]^{-1}dy,\qquad z\ce Ne^{-\tau},
  \end{align}
\end{subequations}
which predict the maximum average radiance $\max_\tau\langle S^+S^-\rangle = 0.196 N^2/4$ at $\tau_{\mathrm{max}}=\ln N+0.330$~\footnote{The actual constant quoted in Ref.~\cite{Degiorgio1971}, 0.357, is slightly incorrect, but the difference has hardly any effect.}.
To arrive at \cref{eq:Pcont}, an initial distribution was chosen by requiring that the initial radiance distribution follow a Bose-Einstein distribution~\cite{Degiorgio1971a}.
Other authors arrive at the same conclusion by matching it to the corresponding Wigner distribution~\cite{Haake1972}, or to predictions from the master equation~\cite{Degiorgio1971,Lemberger2021}.
However, the approximation through the continuum equation \eqref{eq:pn_continuum} is not valid at short times and its results have not rigorously been shown to be correct~\footnote{This issue was commented on by Degiorgio and Ghielmetti~\cite{Degiorgio1971}, but not pursued further.}. As a result, the distribution \cref{eq:Pcont} is incorrect at short times. While the magnetization \cref{eq:mu_and_rho} agrees well with numerics at finite $N$, to our knowledge it has not rigorously been shown that it is valid in the limit $N\to\infty$, not even at sufficiently long times.

The case of $\gamma \neq0$ has received considerably less attention. The competition between collective and incoherent decay has been studied in some detail~\cite{Lee1976}, and a stochastic unravelling was used to efficiently simulate the system~\cite{Clemens2002}, but a complete theory is missing.
We note that in the context of the Dicke phase transition in cavity QED, a number of works have found a non-trivial interplay between dephasing, incoherent loss, and collective coupling~\cite{DallaTorre2013,DallaTorre2016,Kirton2017}.

\subsection{Our contribution}
Our first contribution is to derive a solution for pure superradiance that we rigorously prove to be correct in the large-$N$ limit for all times (\cref{th:pure}).
This is in contrast to the previously established solution, \cref{eq:Pcont}, which, as we show, is asymptotically exact only for intermediate times.
Specifically, we show that the one-norm of the difference between the exact probability vector $\Pex$ and the literature solution $\Pcont$ obeys
\begin{equation}
  \lim_{N\to\infty}| |\Pcont(\tau)-\Pex(\tau)| |_1\neq 0,\qquad \tau\text{ const}
  \label{eq:no_good_for_constant_times}
\end{equation}
but
\begin{equation}
  \lim_{N\to\infty}| |\Pcont(\alpha \ln N)-\Pex(\alpha\ln N)| |_1=0,\quad 0<\alpha<2.
  \label{eq:good_for_log_times}
\end{equation}
For $\alpha>2$, $| |\Pcont| |_1<N^{\alpha-2}\pi^2/6\to0$ as $N\to\infty$, which implies $| |\Pcont-P^*| |\to1$.
\Cref{eq:no_good_for_constant_times} follows from the fact that for constant times, the normalization of $\Pcont$ tends to $| |\Pcont| |_1\to \exp(-\tau-\lambda e^{-\tau})/(1-\exp(-e^{-\tau}))\neq1$ (\cref{lm:Pcont_norm}).
Since the exact solution $\Pex$ is normalized, the reverse triangle identity gives \cref{eq:no_good_for_constant_times}.

Our second contribution is to extend our analysis to $\gamma\neq0$, where we again provide a solution that is valid for all times (\cref{th:loss}).

\section{Large-$N$ limit of pure superradiance}
Our goal is to find an asymptotically exact solution $R_n$
in the sense that the one-norm of the difference to the exact solution $\Pex$ vanishes as $N\to\infty$, like in \cref{eq:good_for_log_times}, but we require that the solution works for all times.
We can prove that this is fulfilled by $R$ given through
\begin{equation}
  R_n(\tau) =
  \begin{cases}
  	R_n^<(\tau), &\tau\leq\tau_1\\
  	R_n^>(\tau), &\tau>\tau_1,
  \end{cases}
  \label{eq:Rn}
\end{equation}
where $\tau_1 = (1+\delta_1)\ln N$ with $\delta_1=2/5$, and
\begin{subequations}
  \begin{align}
  	\label{eq:R<}
  	R_n^<(\tau)&=\left(\frac{N}{n}\right)^2e^{-\tau}(1-e^{-\tau})^{N(N/n-1)},\\
  	R_n^>(\tau)&=\sum_{m=n}^N\mat{m\\n}e^{-m(\tau-\tau_1)}(e^{\tau-\tau_1}-1)^{m-n}R_m^<(\tau_1).
  	\label{eq:R>}
  \end{align}
  \label{eq:Rs}
\end{subequations}
There is a slight subtlety, since the time of the superradiant burst occurs depends on $N$, and thus it is not sufficient to prove that the solution converges at a constant time. Instead we allow for general sequences of times that may depend on $N$.
\begin{theorem}[Superradiant decay from all-inverted state]\label{th:pure}
  For any sequence of times $\{\tau_n > 0: n \in \mathbb{N}\}$, 
  \begin{equation}
  	\lim_{N\to \infty} ||\Pex(\tau_N) - R(\tau_N) ||_1 = 0.
  	\label{eq:convergence}
  \end{equation}
\end{theorem}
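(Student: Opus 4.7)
The plan is to apply Duhamel's formula on the two intervals $[0,\tau_1]$ and $[\tau_1,\tau_N]$ separately. Since $\mathcal L$ in~(\ref{eq:p_n}) generates a Markov semigroup, $e^{s\mathcal L}$ is non-expansive in $\ell^1$ on signed measures; together with $R^<(0)=\Pex(0)$ and $R^<(\tau_1)=R^>(\tau_1)$ (the latter because only the $m=n$ term in~(\ref{eq:R>}) survives at $\tau=\tau_1$), this gives
\begin{align*}
  \|\Pex(\tau)-R^<(\tau)\|_1 &\leq \int_0^\tau\|\partial_s R^<-\mathcal L R^<\|_1\,ds\quad (\tau\leq\tau_1),\\
  \|\Pex(\tau)-R^>(\tau)\|_1 &\leq \|\Pex(\tau_1)-R^<(\tau_1)\|_1+\int_{\tau_1}^\tau\|\partial_s R^>-\mathcal L R^>\|_1\,ds\quad (\tau>\tau_1),
\end{align*}
so it suffices to show that each residual integral is $o(1)$ uniformly in its upper limit.

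For the early regime $[0,\tau_1]$ I would compute $\partial_\tau R^<-\mathcal L R^<$ directly from~(\ref{eq:R<}) using the ratio $R^<_{n+1}/R^<_n=(n/(n+1))^2(1-e^{-\tau})^{-N^2/(n(n+1))}$. The residual factors as $R^<_n$ times a correction capturing (i) the mismatch between $\gamma_n=n(N-n+1)/N$ and the continuum rate $n(N-n)/N$ underlying the ansatz, and (ii) the finite-difference error from the shift $n\mapsto n+1$, whose size is controlled by the deviation of $(1-e^{-\tau})^{-N^2/(n(n+1))}$ from $1$. Both are small on the effective support $n\gtrsim N^{1-\delta_1}$ of $R^<(\tau)$ throughout $[0,\tau_1]$, and the choice $\delta_1=2/5$ makes the integrated $\ell^1$ norm $o(1)$.

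For the late regime $[\tau_1,\tau_N]$ the rewriting $e^{-m(\tau-\tau_1)}(e^{\tau-\tau_1}-1)^{m-n}=e^{-n(\tau-\tau_1)}(1-e^{-(\tau-\tau_1)})^{m-n}$ identifies $R^>$ as the binomial convolution of $R^<(\tau_1)$ with the independent-decay kernel of rate $1$; equivalently $\partial_\tau R^>=\tilde{\mathcal L}R^>$ with $\tilde\gamma_n=n$. The residual then equals $((\tilde{\mathcal L}-\mathcal L)R^>)_n=g_n-g_{n+1}$ with $g_n=n(n-1)R^>_n/N$, so its $\ell^1$ norm is the total variation $\mathrm{TV}(g)$. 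The main obstacle is that the naive bound $\mathrm{TV}(g)\leq 2\|g\|_1=(2/N)\langle n(n-1)\rangle_{R^>}$ diverges as $N^{1/5}$ because of the heavy $(N/n)^2$ tail of $R^<(\tau_1)$. Instead one must exploit the exact cancellation $(n(n-1)/N)(N/n)^2\sim N$, which makes the driving sequence $f_m:=(m(m-1)/N)R^<_m(\tau_1)$ uniformly bounded by $O(N^{-\delta_1})$ across its bulk. Writing $g_n=p^2\sum_m f_m B_{m-2,p}(n-2)$ with $p=e^{-(\tau-\tau_1)}$ and using $\sum_{m\geq n}B_{m-2,p}(n-2)=1/p$, this yields a plateau-shaped $g$ with $\max_n g_n(\tau)\lesssim p\,N^{-\delta_1}$ and hence $\int_{\tau_1}^\infty\mathrm{TV}(g(s))\,ds=O(N^{-\delta_1})=o(1)$. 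The exponent $\delta_1=2/5$ is fixed by balancing this refined late-regime estimate (which degrades as $\delta_1$ grows) against the early-regime residual (which degrades as $\delta_1$ shrinks).
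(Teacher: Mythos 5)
Your overall architecture matches the paper's: split at $\tau_1$, use the non-expansiveness of $e^{s\mathcal L}$ in $\ell^1$ (equivalently, that $e^{\Gamma\tau}$ is stochastic), and reduce to a Duhamel bound on the residual. The continuity claim $R^<(\tau_1)=R^>(\tau_1)$ is correct, as is the identification of $R^>$ as a binomial convolution solving $\partial_\tau R^>=\tilde{\mathcal L}R^>$ with $\tilde\gamma_n=n$.

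Where you diverge is the \emph{early} regime. The paper does not directly estimate $\|\partial_\tau R^< - \mathcal L R^<\|_1$. Instead it chains through two intermediate objects that each exactly solve a nearby equation: the geometric solution $Q$, which exactly solves the linearized system $\partial_\tau Q = \Gamma_0 Q$ with $\gamma^0_n = N-n+1$ (Lemma~\ref{lm:Q}), and the continuum solution $\Pcont$, which exactly solves the PDE~\eqref{eq:pn_continuum} (Lemma~\ref{lm:continuum_works}). The comparisons $\|R^<-Q\|_1$ and $\|R^<-\Pcont\|_1$ are then made as direct pointwise estimates (Lemmas~\ref{lm:QtoR},~\ref{lm:Rtilde}), not as Duhamel integrals. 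This sidesteps the fact that $R^<$ itself satisfies none of these equations exactly. Your direct-residual approach has to control, in addition to the linearization and finite-difference errors you name, a third contribution of the form $\partial_\tau(R^<-Q)-\mathcal L(R^<-Q)$, which is \emph{not} small simply because $\|R^<-Q\|_1$ is small; the time derivative and the discrete gradient of the relative correction $R^</Q$ must both be estimated. These terms can be bounded (they turn out to be $O(e^\tau/N)$ on the effective support), so your route is salvageable, but it is not a shortcut: the same two-regime case analysis (geometric-like near $\tau=0$, continuum-like after $\tau_0$) reappears inside the residual estimate, and your sketch glosses over this.

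In the late regime your argument is essentially the paper's Lemma~\ref{lm:late_times}, but with two issues. First, an arithmetic slip: the naive bound $2\|g\|_1=(2/N)\langle n(n-1)\rangle_{R^>}$ scales as $N^{1-\delta_1}=N^{3/5}$, not $N^{1/5}$, since $\langle m^2\rangle_{R^<(\tau_1)}\sim N^{2-\delta_1}$. Second and more importantly, your step from ``$\max_n g_n\lesssim pN^{-\delta_1}$'' to ``$\mathrm{TV}(g)\lesssim pN^{-\delta_1}$'' is the crux and you assert it without proof. A plateau bound on $\max g$ gives no TV control unless you also show $g$ is (quasi-)unimodal in $n$. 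The paper establishes exactly this through a concrete sign argument: it proves that $(m+1)R^<_{m+1}(\tau_1)-(n-1)R^<_m(\tau_1)\ge 0$ for all relevant $m\ge n$ [Eq.~\eqref{eq:square_bracket_bound}], which forces the bracketed combination in the expansion of $\Gamma_1 R^>$ to be nonnegative, lets the sum telescope, and reduces $\mathrm{TV}(g)$ to boundary terms. This is the technical heart of the late-time lemma; without it (or an equivalent monotonicity argument), your bound does not close.

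In short: your decomposition is genuinely different in the early regime and could work, though it trades the paper's cleaner chain-through-exact-solutions for a more delicate residual estimate; your late-regime argument is the paper's in disguise, and it is missing precisely the nonnegativity lemma that makes the TV bound rigorous.
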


We only sketch the proof here, the full version can be found in \cref{app:proof}.
We separately consider short times (up to $\tau_0=\delta_0\ln N$, with $0<\delta_0<1$), intermediate times (between $\tau_0$ and $\tau_1=(1+\delta_1)\ln N$, with $0<\delta_1<1$), and late times ($\tau>\tau_1$).

At short times, we replace $\gamma_n\to\gamma^0_n=N-n+1$ in \cref{eq:p_n}, which is valid since the probability distribution is expected to have support only for $N-n\ll N$. The resulting equation is solved by 
\begin{equation}
  Q_n(\tau) = e^{-\tau}\left( 1-e^{-\tau} \right)^{N-n},\quad Q_n(0)\ce \delta_{n,N}.
  \label{eq:Q}
\end{equation}
Using this solution, we show that the error due to the substitution $\gamma_n\to\gamma_n^0$ vanishes as $N\to\infty$ as long as we only consider times up to $\tau_0=\delta_0\ln N$ with $0<\delta_0<1$.
We then use $Q(\tau_0)$ as the initial condition for \cref{eq:pn_continuum} and thus obtain a solution up to $\tau_1$. Since $Q_n(\tau_0)$ is sufficiently smooth in $n$, we can show that the error incurred by using the continuum approximation vanishes as $N\to\infty$. A further approximation of the result yields $R^<$ [\cref{eq:R<}].
For late times, we again linearize the equations of motion, $\gamma_n\to\gamma_n^1\ce n$, which is valid as the probability distribution is expected to concentrate around $n\approx0$. This yields the solution $R^>$ in \cref{eq:R>}. We can bound the error to show that $| |R^>-\Pex| |_1\to0$.

A few comments are in order about the solution we provide. 
First, for any finite $N$, $R$ neither exactly fulfils the equation of motion \cref{eq:p_n}, nor the continuum equation \cref{eq:pn_continuum}, and its derivative is discontinuous at $\tau_1$, although $R$ itself is continuous.
Nevertheless, $R$ describes the correct distribution at all times.
Second, as part of our proof, we show that $| |\Pex(\tau)-\Pcont(\tau)| |_1\to0$ as $N\to\infty$, which proves that the solution found in the literature also converges to the right solution, but only for intermediate times.
Third, a remaining question is whether the magnetization and radiance~\eqref{eq:mu_and_rho} are accurate, since they are calculated by integrating the continuum distribution rather than summing the (true) discrete distribution. With the following corollary, we show that both formulae are nevertheless asymptotically exact for all times.

\begin{corollary}[Magnetization and radiance]\label{lm:mag}
  The magnetization and radiance~\eqref{eq:mu_and_rho} predicted in the continuum limit are asymptotically correct for any sequence of times 
  $\{\tau_n > 0: n \in \mathbb{N}\}$
  \begin{subequations}
  	\begin{align}
  	  \lim_{N\to\infty}\mu^{\mathrm{exact}}(\tau_N)-\mu(\tau_N)&\to0,\\
  	  \lim_{N\to\infty}\rho^{\mathrm{exact}}(\tau_N)-\rho(\tau_N)&\to0.
  	\end{align}
  \end{subequations}
\end{corollary}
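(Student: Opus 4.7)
The plan is to reduce the problem in two steps: first, use Theorem~\ref{th:pure} to replace $\Pex$ with the explicit approximation $R$ inside the expectation values; second, show that averages taken against $R$ reproduce the continuum formulas $\mu(\tau)$ and $\rho(\tau)$.

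For the first step, the relevant observables are uniformly bounded, with $|n/N|\leq 1$ and $|n(N-n+1)/N^2|\leq 1/4+O(1/N)$. Since $|\langle f\rangle_P-\langle f\rangle_Q|\leq\|f\|_\infty\|P-Q\|_1$ for any two probability distributions on $\{0,\dots,N\}$, Theorem~\ref{th:pure} immediately yields $\mu^{\mathrm{exact}}(\tau_N)-\mu^R(\tau_N)\to 0$ and $\rho^{\mathrm{exact}}(\tau_N)-\rho^R(\tau_N)\to 0$ for any sequence $\tau_N$, where $\mu^R,\rho^R$ denote the corresponding averages taken under $R$. It therefore suffices to show $\mu^R(\tau_N)\to\mu(\tau_N)$ and $\rho^R(\tau_N)\to\rho(\tau_N)$.

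For the second step I would split any sequence $\tau_N$ into the three time ranges used in the proof of Theorem~\ref{th:pure}, demarcated by $\tau_0=\delta_0\log N$ (with $\delta_0\in(0,1)$) and $\tau_1=(1+\delta_1)\log N$, writing $z=Ne^{-\tau}$. In the intermediate range $\tau_0<\tau_N\leq\tau_1$, the substitution $x_n=n/N$ recasts $\mu^R=\langle n/N\rangle_{R^<}$ as a Riemann sum of step $1/N$ for $z\int_0^1(1/x)\exp[-z(1/x-1)]\,dx$; the change of variable $u=1/x-1$ identifies this integral with $ze^zH(z)=\mu(\tau_N)$. The additional error from replacing $(1-e^{-\tau})^{N(N/n-1)}$ by its exponential form is of the same type already controlled in the continuum-approximation step of Theorem~\ref{th:pure}, and the analogous argument with weight $x(1-x+1/N)$ gives $\rho^R\to\rho(\tau_N)$. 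In the early range $\tau_N\leq\tau_0$, $z_N\geq N^{1-\delta_0}\to\infty$, so integration by parts on $H$ gives $\mu(\tau_N)=1-1/z_N+O(1/z_N^2)\to 1$, while $R^<_n\simeq e^{-\tau}(1-e^{-\tau})^{N-n}$ near $n=N$ yields $\langle N-n\rangle\sim e^{\tau_N}-1\leq N^{\delta_0}-1$, hence $\mu^R\to 1$ and $\rho^R\to 0$. In the late range $\tau_N>\tau_1$, $z_N\leq N^{-\delta_1}\to 0$ and asymptotics of $H$ give $\mu(\tau_N)\sim -z_N\log z_N\to 0$, $\rho(\tau_N)\to 0$; reading \cref{eq:R>} as the binomial distribution of independent single-atom decay gives $\mu^R(\tau_N)=e^{-(\tau_N-\tau_1)}\mu^R(\tau_1)$, which tends to $0$ because $\mu^R(\tau_1)$ does by the intermediate case, and $\rho^R\leq\mu^R+1/N\to 0$.

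The main technical obstacle is controlling the Riemann-sum error in the intermediate regime uniformly in $\tau_N$. The integrand $f(x)=(1/x)\exp[-z(1/x-1)]$ is concentrated in a window of width $\min(z,1/z)$; for $z\in[N^{-\delta_1},N^{1-\delta_0}]$ with $\delta_0,\delta_1\in(0,1)$ this window is strictly wider than the grid spacing $1/N$, so a trapezoidal-rule bound of the form $\|f''\|_{L^1}/N^2$ shows the discretization error vanishes as $N\to\infty$.
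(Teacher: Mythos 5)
Your proposal is correct and follows essentially the same route as the paper: reduce to the approximate distribution via the bounded-observable estimate $|\langle f\rangle_P-\langle f\rangle_Q|\le\|f\|_\infty\|P-Q\|_1$ (the paper leaves this step implicit), then treat the three time regimes separately, with asymptotic expansions of $H(z)$ at $z\to\infty$ and $z\to 0$ for the early and late windows and a Riemann-sum/Taylor-remainder bound against $\Pcont$ for the intermediate window. The only cosmetic differences are that the paper bounds the discretization error by a first-order Taylor remainder combined with the $G_{a,b,c}$ machinery of Lemma~\ref{lm:G} rather than a trapezoidal estimate, and for late times it argues monotone decay from the vanishing mass at $\tau_1$ rather than your explicit binomial-contraction identity $\mu^R(\tau)=e^{-(\tau-\tau_1)}\mu^R(\tau_1)$.
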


\begin{figure}[t!]
  \centering
  \includegraphics[width=\linewidth]{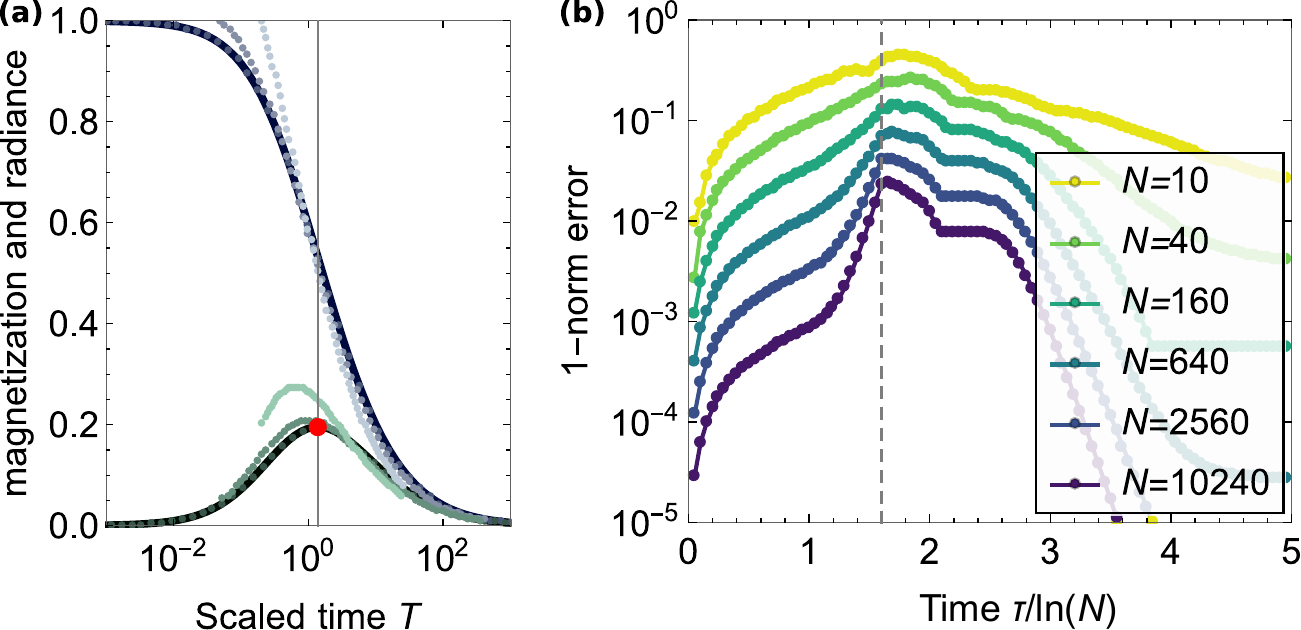}
  \caption{
  	(a) Magnetization $\mu$ (upper blue solid line) and radiance $\rho$ (lower green solid line) as defined in \cref{eq:scaling_form}, together with numerical data for $N=5,20,1000$ (light to dark) showing quick convergence.
  	The largest radiance is $\rho\approx0.196$ at $T\approx1.391$, marked with a red dot.
  	(b) Numerical evaluation of the 1-norm error \cref{eq:convergence}, which vanishes as $N\to\infty$ by \cref{th:pure}.
  	The dashed line marks $\tau_1=1.6\ln N$.
  }
  \label{fig:convergence}
\end{figure}

One might wonder whether a universal form for the magnetization and radiance can be found that is independent of $N$ and thus holds in the large-$N$ limit.
We can obtain such a form by using the continuum limit of \cref{eq:Pcont}.
Performing a change of variables to $T=e^\tau/N$, we obtain
\begin{equation} 
  p(x,T) = \frac{1}{N}\frac{1}{Tx^2}\exp\left( \frac{1-1/x}{T} \right).
  \label{eq:pxT} 
\end{equation}   
Apart from the normalization $1/N$, this expression is independent of $N$.
The corresponding magnetization $\mu=1/2+\langle S_z\rangle /N$ and radiance $\rho\ce \langle S^+S^-\rangle /N^2$ read (plotted in \cref{fig:convergence}a)
\begin{equation}
  \mu(T)=\frac{e^{1/T}}{T}H(1/T),\quad \rho(T)=\frac 1T-\frac{e^{1/T}}{T^2}(1+T)H(1/T).
  	\label{eq:scaling_form}
\end{equation}
Note that \cref{eq:pxT,eq:scaling_form} are identical to \cref{eq:Pcont,eq:mu_and_rho} up to the change of variables, but we find it more convenient to analyze the behaviour at $N\to\infty$ in this form as they are independent of atom number $N$, having absorbed all dependence on $N$ into $T$. 
Note that the time variable $T(t)=\exp(Nt)/N$ starts at $T(0)=1/N$ and exponentially quickly moves to infinity.
The superradiant pulse occurs close to $T=1$ and, as is evident from \cref{eq:scaling_form}, its height scales with $N^2$. 
We can determine the maximum radiance numerically and find it is $\rho\approx0.196$ at $T\approx1.39$, at which point $\mu\approx 0.532$.
In normal units, the time of maximum radiance is $t_{\mathrm{pulse}}=\ln(N)/N + 0.330/N$.
At $t=0$ we have $T=1/N$, at which point $\mu(1/N)\approx 1-1/N$.
Thus, the starting point of the dynamics corresponds to the time at which on average one photon has decayed.
Another result from the above analysis is that in the original time $t$, the magnetization tends to a step function $\mu(t)\to \Theta(\ln(N)/N-t_{\mathrm{pulse}})$.

\section{Superradiance in the presence of incoherent loss}
Since incoherent loss competes with superradiance and is virtually unavoidable in realistic settings, an important question is whether it may preclude superradiance in certain regimes. 
We answer this question in two parts. First,  we derive the exact solution as $N\to\infty$, which shows that in the large-$N$ limit, superradiance always persists (\cref{sec:loss_solution}).
Second, the question then arises whether there is a critical emitter number that is needed to observe superradiance. This is of experimental relevance particularly in systems in which $\gamma$, the ratio of free-space decay to waveguide decay, is large. We answer this by calculating the threshold emitter number $N$ above which signatures of superradiance emerge as a function of $\gamma$ (\cref{sec:threshold}).
Finally we also compute how many photons decay into free space on average (\cref{sec:incoherent_number}).

\subsection{Asymptotically exact solution}\label{sec:loss_solution}
The asymptotically exact solution $\R(\tau)$ that converges to the exact solution $\PPex$ of \cref{eq:pnr} at all times
is given by
\begin{equation}
  \R_{n,r}(\tau) =
  \begin{cases}
  	\R_{n,r}^<(\tau), &\tau\leq\tau_1,\\
  	\R_{n,r}^>(\tau), &\tau>\tau_1,
  \end{cases}
  \label{eq:Rnr}
\end{equation}
where $\tau_1=(1+\delta_1)\ln N$ with $\delta_1=2/5$, and
\begin{subequations}
  \begin{align}
  	\label{eq:Rnr<}
  	\R_{n,r}^<(\tau)&=R_{n+2r}(\tau)F_{r}(T_n(\tau)),\\
  	\R_{n,r}^>(\tau)&=\sum_{r'=r}^{N/2}
  	\mat{r'\\r}e^{-\gamma r'(\tau-\tau_1)/N}(e^{\gamma (\tau-\tau_1)/N}-1)^{r'-r}\\
  	&\times
  	\sum_{n'=n}^{N-2r'}\mat{n'\\n}e^{-n'(\tau-\tau_1)}(e^{\tau-\tau_1}-1)^{n'-n}\R_{n'r'}(\tau_1),
  	\label{eq:Rnr>}
  \end{align}
  \label{eq:Rnrs}
\end{subequations}
where $F$ is a Poissonian distribution
\begin{equation}
  F_r(\tau) = e^{-\gamma \tau}\frac{(\gamma \tau)^r}{r!},
  \label{eq:F_thm}
\end{equation}
and
\begin{equation}
  T_n(\tau)=\tau+\frac{n}{N}-\frac{n}{n+e^{-\tau}(N-n)}+\ln\left[ \frac{n+e^{-\tau}(N-n)}{N} \right].
  \label{eq:Tn_thm}
\end{equation}

We are interested in the one-norm of the difference of $\R$ and the exact solution $\PPex$ and again allow for arbitrary sequences of times.
\begin{theorem}[Superradiance with incoherent loss]\label{th:loss}
  For any sequence of times $\{\tau_n > 0: n \in \mathbb{N}\}$, 
  \begin{equation}
  	\lim_{N\to \infty} ||\PPex(\tau_N) - \R(\tau_N) ||_1 = 0.
  	\label{eq:convergence_loss}
  \end{equation}
\end{theorem}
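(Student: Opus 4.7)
The plan is to reuse the three-regime strategy that proved Theorem~\ref{th:pure}, splitting the evolution at $\tau_0 = \delta_0 \log N$ and $\tau_1 = (1+\delta_1)\log N$, and in each window replacing the rate equation \eqref{eq:pnr} by a tractable approximation whose exact solution matches the piecewise definition \eqref{eq:Rnrs}. The cumulative one-norm error would then be controlled via a Duhamel bound, using that the exact semigroup is stochastic and hence a one-norm contraction, so that only the residual $\|(\LL^* - \tilde\LL)\R\|_1$ needs to be integrated.

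The factorized ansatz $\R^<_{n,r}(\tau) = R_{n+2r}(\tau) F_r(T_n(\tau))$ encodes the physical expectation that, up to the end of the superradiant burst, the total excitation count $n+2r$ evolves under pure Dicke dynamics, while, conditionally on that trajectory, each bright excitation independently seeds dark excitations via a Poisson process whose mean is $\gamma T_n(\tau)$. For short times $\tau \leq \tau_0$, probability mass concentrates near $n \approx N-2r$ with $r$ small, so the factors involving $\tildeN$ and $N-2r$ in \eqref{eq:rates_main} can be Taylor-expanded to leading order. Substituting $\R^<$ into the linearized equation reduces it to the pure superradiance equation \eqref{eq:p_n} for $n+2r$ plus a decoupled pure-birth equation for $r$ whose characteristic flow is exactly $T_n(\tau)$ in \eqref{eq:Tn_thm}. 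For the intermediate window $\tau_0 \leq \tau \leq \tau_1$, I would pass to the continuum in $n$ while keeping $r$ discrete, verify that $\R^<$ solves the resulting transport PDE to leading order, and reuse the smoothness estimates from the pure-case proof to bound the discretization error uniformly in $r$.

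For late times $\tau > \tau_1$, the bulk of $\PPex$ lives at $n, r \ll N$, so the rates expand to $\Gamma^{(1)} \approx n + \gamma r/N$, $\Gamma^{(2)} \approx n$, $\Gamma^{(3)} \approx 0$, and $\Gamma^{(4)} \approx \gamma r/N$. The resulting linear equation is the generator of two independent death processes (rate $1$ per bright excitation, rate $\gamma/N$ per dark excitation), whose joint transition kernel is precisely the product of binomial thinnings in \eqref{eq:Rnr>}. An a~priori concentration bound showing that $\R^>(\tau)$ charges only $n, r \lesssim N^{1/2}$ with overwhelming probability then closes the Duhamel estimate, and convergence of $\R(\tau_1)$ itself to $\PPex(\tau_1)$ is inherited from the earlier two regimes.

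The main obstacle will be the short/intermediate-time analysis: the denominators $(\tildeN + 1)$, $(\tildeN + 2)$, and $(N-2r)$ in \eqref{eq:rates_main} mix $n$ and $r$ nontrivially, so substituting the factorized ansatz produces many residual $O(1/N)$ terms whose cancellation pins down the precise form \eqref{eq:Tn_thm} of $T_n(\tau)$ via the matching $\partial_{T_n} F_r \cdot \partial_\tau T_n \leftrightarrow $ incoherent-loss rates. Verifying this cancellation cleanly, and then showing the remaining residual, summed against $\R^<$, has vanishing one-norm, will require tail bounds on $r$ analogous to the large-deviation control on $n/N$ used in the proof of Theorem~\ref{th:pure}, here lifted to the joint $(n,r)$ distribution.
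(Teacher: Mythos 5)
Your high-level plan (three time windows, factorized ansatz $R_{n+2r}\cdot F_r(T_n)$, continuum in $n$ with discrete $r$, linearized death-process rates at late times, Duhamel plus stochastic-semigroup contraction) matches the paper's proof strategy closely, and your physical reading of the Poisson factor and its time reparametrization $T_n$ is essentially what the paper verifies in \cref{lm:continuum_loss}. Two remarks, the second of which is a genuine gap.

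First, a structural point: the paper does not compare $\R^<$ directly to $\PPex$ in the early and intermediate windows. It introduces an auxiliary process $\PPap$ with simplified $r$-independent rates \eqref{eq:rates_approx}, chains $\R^< \to \PPcont \to \PPap \to \PPex$, and proves separately (\cref{lm:approximation_converges}) that $\PPap\to\PPex$ because $\Gamma-\tilde\Gamma$ has entries $\OO(r/N)$. Your remark about the $\tildeN$ and $N-2r$ denominators "mixing $n$ and $r$" is exactly the issue this intermediate object sidesteps; without it you would have to track those residuals through the entire continuum analysis.

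Second, and more seriously, the late-time argument as you describe it does not close. You propose a concentration bound $n,r\lesssim N^{1/2}$ followed by a Duhamel estimate. But at $\tau=\tau_1=(1+\delta_1)\log N$ with $\delta_1=2/5$ the distribution $\Pcont_n(\tau_1)\propto (N/n)^2 e^{-\tau_1}$ has a power-law tail extending up to $n\sim N^{1-\delta_1}=N^{3/5}>N^{1/2}$, so the concentration claim fails right at the start of the window. Worse, the Duhamel integrand $\|\Gamma_1\R^>\|_1$ involves rate differences of order $n^2/N$, and a triangle-inequality estimate gives $\sum_n (n^2/N)\R^>_n\sim\langle n^2\rangle/N\sim N^{1-\delta_1}$, which diverges rather than vanishes. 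The paper avoids this by a sign/monotonicity argument (eqs.~\eqref{eq:square_bracket_bound}, \eqref{eq:square_brackets_positive}): the bracketed differences in $\|\Gamma_1\R^>\|$ have a definite sign, so the absolute values can be removed, the sum over $n$ telescopes, and only exponentially small boundary terms $\sim N^{-\delta_1}e^{-N(\tau-\tau_1)}$ survive. This cancellation is essential and is the missing idea in your proposal.
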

The proof closely follows the structure of \cref{th:pure} and can be found in \cref{app:proof2}.
To unpack this result, we highlight a few important properties of superradiance in the presence of incoherent loss below.

\subsection{Qualitative behaviour at large $N$ and threshold}\label{sec:threshold}
At large atom numbers $N$, the average magnetization and radiance are given by \cref{eq:mu_and_rho} up to subleading errors.
While correct, this description misses the fact some excitations are transferred into states that do not decay collectively.
The population of such dark (or subradiant) excitations grows linearly at a rate $\gamma $ until the superradiant burst, such that on average, about $\gamma \ln N$ such dark excitations are produced.
Since they can only decay incoherently, they dominate the late time behaviour with a slow decay at a rate $\gamma/N$.

The result that magnetization and radiance behave as in pure superradiance does 
not necessarily apply to finite $N$ and in fact there is a threshold $N$ below which superradiance disappears.
One necessary requirement for superradiance is that the maximum radiance occurs at $t>0$.
We thus can calculate the threshold above which this is the case directly from the quantum master equation~\eqref{eq:QME} by evaluating 
$\left.\frac{d^2\langle \mu\rangle}{dt^2}\right|_{t=0} = \gamma^2/2+\gamma-N/2+1$,
which implies
$N_{\mathrm{threshold}} = \gamma^2+2\gamma+2$.

\subsection{Number of incoherently decayed photons}\label{sec:incoherent_number}
Since it is clear that superradiance persists in the limit $N\to\infty$, we compute the number of photons lost incoherently during the whole process, $N_{\mathrm{loss}}$, based on the assumption that the system undergoes pure superradiant decay, which becomes exact as $N\to\infty$.
Since the incoherent loss occurs at a rate equal to $\gamma $ times the number of excitations in the system, we integrate the solution \cref{eq:scaling_form} to obtain
\begin{equation}
  N_{\mathrm{loss}} = \gamma \left[\gamma_E + \ln N + e^N H(N)\right]\simeq \gamma (\gamma_E+\ln N),
  \label{eq:lost_photons}
\end{equation}
where $\gamma_E\simeq0.577$ is Euler's constant, and the approximation becomes very good for $N>10$.
Note that \cref{lm:mag} also applies to \cref{eq:lost_photons} as it is just the integral over $\mu(\tau)$.
The leading-order behaviour $N_{\mathrm{loss}}=\gamma \ln N$, valid when $\ln N\gg1\gg N_{\mathrm{loss}}/N$, can be obtained from the simple observation that the superradiant burst occurs at $t\simeq\ln N/N$ and that the emitters are mostly excited before and mostly in the ground state after.
The fact that $N_{\mathrm{loss}}\approx\gamma\ln N$ implies that as $N\to\infty$, the fraction of atoms that decay incoherently is vanishingly small.
This is ultimately the reason why in the large-$N$ limit, the predictions from pure superradiance for magnetization and radiance also hold in the presence of incoherent loss.

\section{Outlook}
Our work improves previous solutions to Dicke superradiance and puts them on a rigorous footing.
There are a number of other experimentally relevant effects that should be considered in future work, some of which may fundamentally change the behaviour at large $N$.
Typical effects include disorder in the decay rates of atoms into the collective mode, inhomogeneous broadening, finite temperature, and, specifically in the case of waveguide QED, spatial disorder. All of these distinguish the atoms and therefore make a straightforward extension of the theory presented here difficult. While inhomogeneous broadening becomes negligible at large $N$, it is not clear at present how the other contributions would affect our solution.

\begin{acknowledgments}
  We would like to thank Nina Fröhling and Tao Shi for insightful discussions.
  We acknowledge funding from ERC Advanced Grant QUENOCOBA under the EU Horizon 2020 program (Grant Agreement No. 742102), and within the D-A-CH Lead-Agency Agreement through Project No. 414325145 (BEYOND C).
  RT acknowledges Max Planck Harvard research center for quantum optics (MPHQ) postdoctoral fellowship. 
\end{acknowledgments}

\appendix
\section{Proof of \cref{th:pure}}\label{app:proof}
We prove \cref{th:pure} in a series of lemmas following the outline given in the main text (see \cref{app:main_lemmas}). Before, we establish some definitions and auxiliary results.

\subsection{Definitions}

\begin{definition}[Exact solution $\Pex$]\label{def:P}
  We denote the exact solution by $\Pex(\tau)=(\Pex_0(\tau),\cdots,\Pex_N(\tau))$. It obeys the equation
  \begin{equation}
  	\partial_\tau \Pex_n(\tau) = -\gamma_n \Pex_n(\tau) + \gamma_{n+1}\Pex_{n+1}(\tau)
  	\tag{\ref{eq:p_n}, restated}
  \end{equation}
  with $\gamma_n=n(N-n+1)/N$,
  which can equivalently be written as the matrix equation
  \begin{equation}
  	\partial_\tau\Pex(\tau) = \Gamma \Pex(\tau).
  	\label{eq:matrix_P}
  \end{equation}
\end{definition}
\begin{definition}[Early solution $Q$]\label{def:Q}
  We denote the solution at early times by $Q(\tau)=(Q_0(\tau),\cdots,Q_N(\tau))$, with components
  \begin{equation}
  	\tag{\ref{eq:Q}, restated}
  	Q_n(\tau) = e^{-\tau}\left( 1-e^{-\tau} \right)^{N-n}.
  \end{equation}
  $Q$ obeys the equation
  \begin{equation}
  	\dot Q_n(\tau) = -\gamma_n^0 Q_n(\tau) + \gamma_{n+1}^0Q_{n+1}(\tau)
  	\label{eq:q_n}
  \end{equation}
  with $\gamma_n^0=N-n+1$,
  which can equivalently be written as the matrix equation
  \begin{equation}
  	\partial_\tau Q(\tau) = \Gamma_0 Q(\tau).
  	\label{eq:matrix_Q}
  \end{equation}
\end{definition}
\begin{definition}[Continuum solution $\Pcont$]\label{def:Rtilde}
  We denote the continuum solution by $\Pcont(\tau)$ with components
  \begin{equation}
  	\Pcont_n(\tau)\ce\frac{N^2}{n^2}\exp\left[ -\tau-e^{-\tau}N\left(\frac{N}{n}-1\right) \right],
  	\tag{\ref{eq:Pcont}, restated}
  \end{equation}
  which, for continuous $n\in(0,N)$ obeys
  \begin{equation}
  	\partial_\tau \Pcont_n(\tau) = \partial_n[n(N-n) \Pcont_n(\tau)]/N.
  	\tag{\ref{eq:pn_continuum}, restated}
  \end{equation}
  We take $\Pcont$ to denote the vector formed by taking integer $n$, $\Pcont=(\Pcont_0, \cdots, \Pcont_N)$.
\end{definition}
Note that \cref{def:Rtilde} is the solution found in the literature.
\begin{definition}[Our solution $R$]\label{def:R}
  We use $R^<=(R^<_0,\cdots, R^<_N)$ to denote our solution
  \begin{equation}
  	\begin{aligned}
  	  R^<_n(\tau) &= \left(\frac{N}{n}\right)^2e^{-\tau}(1-e^{-\tau})^{(N-n)N/n}, \quad R^<_0(\tau) =0,
  	\end{aligned}
  	\tag{\ref{eq:Rn}, restated}
  \end{equation}
\end{definition}

A choice of parameters consistent with all constraints is
\begin{subequations}
  \begin{align}
  	\delta_0 &= 3/4,\qquad \tau_0=\delta_0\ln N\\
  	\delta_1 &= 2/5,\qquad \tau_1=(1+\delta_1)\ln N\\
  	\mu &= 4/5.
  \end{align}
  \label{eq:parameters}
\end{subequations}

\subsection{Additional lemmas}
To bound the differences between the various probability vectors, we need a few additional results.
In the following, $| |\cdot| |$ always denotes the 1-norm.

First we show that in the limit $N\to\infty$, the probability mass of the distributions $Q(\tau)$ and $R(\tau)$ in the interval from $n=0$ to $N-N^\mu$ vanishes faster than any polynomial of $N$. We will use this on many occasions to restrict the range of sums over $n$.
\begin{lemma}[Vanishing probability mass]\label{lm:vanishing_prob_mass}
  If $\tau<\tau_0=\delta_0\ln N$ and $\mu>\delta_0$,
  \begin{equation}
  	\lim_{N\to\infty}\sum_{n=0}^{N-N^\mu}N^kQ_n(\tau)
  	=\lim_{N\to\infty}\sum_{n=0}^{N-N^\mu}N^kR_n(\tau)
  	=0.
  \end{equation}
\end{lemma}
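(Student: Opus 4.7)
My plan is to exploit the stretched-exponential smallness of $(1-e^{-\tau})^K$ when $\tau \leq \delta_0 \log N$ and the exponent $K$ is of order $N^\mu$ with $\mu > \delta_0$. Since $\tau \leq \delta_0 \log N$ gives $1 - e^{-\tau} \leq 1 - N^{-\delta_0}$, one has $(1-e^{-\tau})^K \leq \exp(-N^{-\delta_0} K)$, which for $K \geq N^\mu$ becomes $\exp(-N^{\mu-\delta_0})$: a stretched-exponential decay factor in $N$ that dominates any polynomial prefactor. The rest is bookkeeping about (i) how large the exponent $K$ is in each case and (ii) how large the polynomial prefactors can become.

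First I would apply this to $Q_n(\tau)=e^{-\tau}(1-e^{-\tau})^{N-n}$. For $n \leq N - N^\mu$ the exponent $N-n$ is at least $N^\mu$, so each term is at most $\exp(-N^{\mu-\delta_0})$. The truncated sum contains at most $N+1$ terms, so after multiplication by $N^k$ the total is bounded by $N^{k+1}\exp(-N^{\mu-\delta_0})$, which vanishes as $N\to\infty$ since $\mu > \delta_0$.

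Next I would treat $R^<_n(\tau) = (N/n)^2 e^{-\tau}(1-e^{-\tau})^{(N-n)N/n}$. The lower bound $(N-n)N/n \geq N-n \geq N^\mu$, valid because $n \leq N$ and $N-n \geq N^\mu$, again yields $(1-e^{-\tau})^{(N-n)N/n} \leq \exp(-N^{\mu-\delta_0})$. The only new feature is the prefactor $(N/n)^2$, which can be as large as $N^2$ at $n=1$; however $\sum_{n\geq 1}(N/n)^2 \leq \pi^2 N^2/6$ (and $R^<_0 = 0$ by definition), so the sum is bounded by $\tfrac{\pi^2}{6} N^{k+2}\exp(-N^{\mu-\delta_0})$, which again tends to $0$.

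I do not foresee any real obstacle: the lemma amounts to the observation that stretched-exponential decay with strictly positive rate $\mu - \delta_0$ beats every polynomial factor in $N$, and the arithmetic to verify the two exponent lower bounds (for the $N-n$ and $(N-n)N/n$ exponents) is elementary. The main thing to keep in mind is to treat the $(N/n)^2$ weight in $R^<$ carefully so that it does not contribute more than a polynomial factor in $N$ after summation.
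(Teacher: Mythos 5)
Your proof is correct and uses essentially the same approach as the paper: bound $1-e^{-\tau}\leq 1-N^{-\delta_0}\leq e^{-N^{-\delta_0}}$, bound the exponent from below by $N^\mu$, and observe that the resulting stretched-exponential $\exp(-N^{\mu-\delta_0})$ beats any polynomial in $N$. The only cosmetic difference is that you bound $\sum_{n\geq 1}(N/n)^2$ by $\pi^2N^2/6$, whereas the paper more crudely multiplies the worst-case prefactor $N^2$ by the number of terms $N-N^\mu$; both yield a polynomial prefactor and the same conclusion.
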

\begin{proof}
  \begin{equation*}
  	\begin{aligned}
  	  \sum_{n=0}^{N-N^\mu}N^kQ_n(\tau)
  	  &< N^k(N-N^\mu)\max_{\tau\in[0,\tau_0]}\max_{n\in[0,N-N^\mu]}e^{-\tau}(1-e^{-\tau})^{N-n}\\
  	  &<N^{1+k}\left( e^{-N^{-\delta_0}} \right)^{N^\mu}\to0.
  	\end{aligned}
  \end{equation*}
  To get to the second line, we use that the maximum is obtained at $n=N-N^\mu$ and we bound $e^{-\tau}\leq1$ and $1-e^{-\tau}\leq1-e^{-\tau_0}$. Similarly,
  \begin{equation*}
  	\begin{aligned}
  	  \sum_{n=0}^{N-N^\mu}N^kR_n(\tau)
  	  &< (N-N^\mu)N^{2+k}\max_{\tau\in[0,\tau_0]}\max_{n\in[0,N-N^\mu]}e^{-\tau}(1-e^{-\tau})^{N(N/n-1)}\\
  	  &<N^{3+k}\left( e^{-N^{-\delta_0}} \right)^{N^\mu}\to0.
  	\end{aligned}
  \end{equation*}
\end{proof}

\begin{lemma}[Normalization of $\Pcont$]\label{lm:Pcont_norm}
  For constant times $\tau$, the normalization of $\Pcont$ obeys
  \begin{equation}
  	\lim_{N\to\infty}| |\Pcont(\tau)| |_1=\frac{ \exp(-\tau-\lambda e^{-\tau})}{1-\exp(-e^{-\tau})}\neq1.
  	\label{eq:pcont_norm_limit}
  \end{equation}
\end{lemma}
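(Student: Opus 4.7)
\medskip

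\textbf{Proof plan for \cref{lm:Pcont_norm}.} The plan is to split the sum $\|\Pcont(\tau)\|_1=\sum_{n=1}^N \Pcont_n(\tau)$ into a ``head'' $n\in(N-N^\mu,N]$ and a ``tail'' $n\in[1,N-N^\mu]$ for some fixed $\mu\in(0,1/2)$, show that the tail is negligible because of the doubly-exponential suppression in the definition of $\Pcont$, and then reparametrize the head by $k=N-n$ so that it becomes a geometric series whose limit gives the right-hand side of \cref{eq:pcont_norm_limit}.

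First I would dispose of the tail. For $n\le N-N^\mu$ one has $N(N-n+\lambda)/n\ge N\cdot N^\mu/(N-N^\mu)\ge N^\mu$ for $N$ large, while the prefactor satisfies $N^2/n^2\le N^2$. Hence every term of the tail is bounded by $N^2 e^{-\tau}\exp(-e^{-\tau}N^\mu)$, and summing over at most $N$ values of $n$ yields a tail bound $N^3 e^{-\tau}\exp(-e^{-\tau}N^\mu)$, which goes to zero as $N\to\infty$ for any fixed $\tau$. (This is essentially the same mechanism as \cref{lm:vanishing_prob_mass}; the statement could either be quoted or re-derived in a line.)

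Next I would analyse the head. Writing $n=N-k$ with $0\le k<N^\mu$ and expanding,
\begin{equation*}
\frac{N^2}{(N-k)^2}=1+O(k/N),\qquad \frac{N(N-n+\lambda)}{n}=\frac{N(k+\lambda)}{N-k}=k+\lambda+O(k^2/N),
\end{equation*}
where the error terms are uniform in $k\in[0,N^\mu)$ provided $\mu<1/2$, so that $N^{2\mu-1}\to 0$. Consequently
\begin{equation*}
\Pcont_{N-k}(\tau)=\bigl(1+o(1)\bigr)\,e^{-\tau}\exp\bigl[-e^{-\tau}(k+\lambda)\bigr]
\end{equation*}
uniformly in $k\in[0,N^\mu)$. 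Summing the geometric series,
\begin{equation*}
\sum_{k=0}^{\lfloor N^\mu\rfloor -1}\Pcont_{N-k}(\tau)=\bigl(1+o(1)\bigr)\,e^{-\tau}e^{-\lambda e^{-\tau}}\frac{1-\exp(-e^{-\tau}\lfloor N^\mu\rfloor)}{1-\exp(-e^{-\tau})}\xrightarrow[N\to\infty]{}\frac{\exp(-\tau-\lambda e^{-\tau})}{1-\exp(-e^{-\tau})}.
\end{equation*}
Combining with the tail estimate gives \cref{eq:pcont_norm_limit}. The right-hand side differs from $1$ for every real $\tau$: expanding $1-\exp(-e^{-\tau})=e^{-\tau}-e^{-2\tau}/2+\cdots$ shows that at $\lambda=0$ the limit equals $e^{-\tau}/(1-e^{-e^{-\tau}})\to 1/e^{-\tau}\cdot e^{-\tau}\cdot(1+e^{-\tau}/2+\cdots)=1+e^{-\tau}/2+O(e^{-2\tau})$, strictly greater than $1$.

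The only mildly delicate point is the uniformity of the expansion $N(N-n+\lambda)/n=k+\lambda+O(k^2/N)$ inside the exponential: one has to verify that $e^{-\tau}\cdot O(N^{2\mu-1})\to 0$ uniformly in $k$, which forces the constraint $\mu<1/2$ but is otherwise harmless at fixed $\tau$. Everything else is a geometric-series calculation and a crude polynomial-vs-doubly-exponential bound.
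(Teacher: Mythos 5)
Your proof is correct and follows essentially the same route as the paper's: discard the tail $n\le N-N^\mu$ using the doubly-exponential suppression, substitute $s=N-n$ (you call it $k$), and recognize a geometric series in $a=\exp(-e^{-\tau})$. The only genuine difference is one of care: you explicitly impose $\mu<1/2$ so that the $O(k^2/N)$ error inside the exponential is uniform over the head, whereas the paper leaves $\mu$ unspecified and absorbs the correction into an $a^{\OO(s^2/N)}$ factor without comment; you also verify $\neq 1$ by a short expansion (at $\lambda=0$), which the paper merely asserts. Neither point changes the argument, so this is the same proof, slightly more pedantically done.
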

\begin{proof}
  We would like to evaluate $| |\Pcont| |_1=\sum_{n=0}^N\Pcont_n$ for constant $\tau$. First note that if $n\leq N-N^\mu$ for any $\mu>0$, we have $\Pcont_n<(N^2/n^2)\exp[-e^{-\tau}{N^\mu}]\to0$ and thus we can restrict attention to $n>N-N^\mu$.
  We define $s=N-n$ and evaluate
  \begin{equation}
  	\begin{aligned}
  	  &| |\Pcont| |_1=\sum_{s=0}^{N^\mu}\left( 1-\frac{s}{N} \right)^{-2}a^{N\left( (N+\lambda)/(N-s)-1\right)}\\
  	  &=e^{-\tau}a^\lambda(1+\OO(N^{\mu-1}))\sum_{s=0}^{N^\mu} a^s a^{\OO(s^2/N)},
  	\end{aligned}
  \end{equation}
  where $a=\exp(-e^{-\tau})$ is a constant with $0<a<1$.
  In the limit $N\to\infty$ the geometric series gives $e^{-\tau}/(1-a)$, which establishes the result. 
\end{proof}

We frequently need to bound sums over $\Pcont$ for times in the range $\tau\in[\tau_0,\tau_1]$, $\tau_1=(1+\delta_1)\ln N$, for which we use the following definition.
\begin{definition}\label{def:G}
  \begin{equation}
  	G_{a,b,c}\ce N\max_{\tau\in[\tau_0,\tau_1]}\max_n\Pcont_n(\tau)n^{-a}N^be^{-c\tau}.
  	\label{eq:Gabc}
  \end{equation}
\end{definition}

\begin{lemma}[Bounds on $G$]\label{lm:G}
  \begin{equation}
  	\begin{aligned}
  	  G_{a,b,c}(\tau)&=
  	  \begin{cases}
  	  	\OO(N^{1+b-a-\delta_0(1+c)}+N^{\delta_1(1+a-c)+b-a-c}) &\mathrm{if}\, 1+a-c>0\\
  	  	\OO(N^{1+b-a-\delta_0(1+c)}+N^{b-c-a})&\mathrm{otherwise},
  	  \end{cases}
  	\end{aligned}
  	\label{eq:Gbound}
  \end{equation}
  where $f(N)=\OO(g(N))$ is standard big-O notation, i.e., $\exists M,N_0>0$ such that $|f(N)|\leq M g(N)$ for all $N\geq N_0$.
\end{lemma}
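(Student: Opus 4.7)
The plan is to reduce the joint maximization over $\tau$ and $n$ to two tractable one-dimensional optimizations via a reparametrization of time. I would set $u \coloneqq Ne^{-\tau}$, so that as $\tau$ ranges over $[\tau_0,\tau_1]$ the new variable $u$ ranges over $[N^{-\delta_1},N^{1-\delta_0}]$. Substituting directly into $\Pcont_n$ from \cref{eq:Pcont} yields the factorized form
\begin{equation*}
  \Pcont_n(\tau)\,n^{-a}N^b e^{-c\tau} = N^{1+b-c}\,u^{1+c}\,e^{u}\,n^{-2-a}\,e^{-uN/n},
\end{equation*}
so that
\begin{equation*}
  G_{a,b,c} = N^{2+b-c}\,\max_{u}\Bigl[u^{1+c}e^{u}\,\max_{1\le n\le N} n^{-2-a}e^{-uN/n}\Bigr].
\end{equation*}

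Next I would execute the inner maximization over $n$, relaxing the integer constraint to an upper bound (this costs only an $O(1)$ factor and never affects the exponent). The map $n\mapsto n^{-2-a}e^{-uN/n}$ has a unique critical point $n^{\star}=uN/(2+a)$. For large $N$ the bound $n^{\star}\ge 1$ is automatic because $u\ge N^{-\delta_1}$ with $\delta_1<1$, so the only relevant split is whether $n^{\star}\le N$, which is equivalent to $u\le 2+a$. This gives an interior maximum $C_a(uN)^{-(2+a)}$ with $C_a=(2+a)^{2+a}e^{-(2+a)}$ when $u\le 2+a$, and a right-endpoint value $N^{-(2+a)}e^{-u}$ when $u\ge 2+a$.

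Plugging these back and performing the remaining one-dimensional optimization over $u$, the outer maximum splits into two regimes. In the right regime $u\in[2+a,N^{1-\delta_0}]$ the factors $e^{u}$ and $e^{-u}$ cancel, leaving $u^{1+c}N^{-(2+a)}$; for $1+c\ge 0$ the maximizer is $u=N^{1-\delta_0}$, and multiplying by the $N^{2+b-c}$ prefactor reproduces exactly $N^{1+b-a-\delta_0(1+c)}$, the first term of the stated bound. In the left regime $u\in[N^{-\delta_1},2+a]$ the expression becomes $C_a u^{c-a-1}e^{u}N^{-(2+a)}$; since $e^{u}\le e^{2+a}=O(1)$, the optimization reduces to $u^{c-a-1}$ over a bounded interval, and the sign of $c-a-1$ decides the answer. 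If $1+a-c>0$ then $u^{c-a-1}$ is decreasing, the maximizer is $u=N^{-\delta_1}$, and the full contribution is $N^{\delta_1(1+a-c)+b-a-c}$. Otherwise $u^{c-a-1}$ is non-decreasing, the maximum sits at the constant $u=2+a$, and the contribution collapses to $N^{b-a-c}$.

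Combining the two regime contributions reproduces the two cases in the stated bound. The main obstacle is not technical but combinatorial: one must carefully track which endpoint of $u$ realizes the max in each sub-regime as a function of the signs of $1+c$ and $1+a-c$, and verify that the splitting point $u=2+a$ lies strictly inside $[N^{-\delta_1},N^{1-\delta_0}]$ for all sufficiently large $N$, which is immediate from $\delta_0,\delta_1\in(0,1)$. In corner cases where $1+c<0$ the right-regime contribution shrinks to $O(N^{b-a-c})$, but is harmlessly absorbed into the second stated term whenever $1+a-c>0$, so the bound as stated remains valid.
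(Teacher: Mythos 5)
Your proof is correct and follows essentially the same approach as the paper: factorize the objective, locate the inner critical point $n^{\star}=N^{2}e^{-\tau}/(2+a)$, split the time range according to whether $n^{\star}$ exceeds $N$, and optimize the surviving one-dimensional expression at the endpoints. The change of variables $u=Ne^{-\tau}$ is a cosmetic simplification, and your explicit treatment of the $1+c<0$ corner (where the right-regime maximizer sits at the interior split point rather than $u=N^{1-\delta_0}$) is slightly more careful than the paper, which silently assumes $1+c>0$ since that is the only case it uses.
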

\begin{proof}
  First, note that we can write
  \begin{equation}
  	G_{a,b,c}=\max_\tau N^{3+b}e^{-(1+c)\tau}\max_n \frac{\exp(-e^{-\tau} N(N/n-1))}{n^{2+a}}.
  \end{equation}
  Let us bound $\tau\leq\ln N$ and $\tau>\ln N$ separately.
  For this we define
  \begin{equation}
  	G_{a,b,c}^<\ce N\max_{\tau\in[\tau_0,\ln N]}\max_n\Pcont_n(\tau)n^{-a}N^be^{-c\tau}.
  	\label{eq:G<}
  \end{equation}
  and
  \begin{equation}
  	G_{a,b,c}^>\ce N\max_{\tau\in[\ln N,\tau_1]}\max_n\Pcont_n(\tau)n^{-a}N^be^{-c\tau}.
  	\label{eq:G>}
  \end{equation}
  For $\tau_1>\tau>\ln N$, the maximum with respect to $n$ is reached at 
  \begin{equation}
  	n_{\mathrm{max}}=\frac{N^2e^{-\tau}}{2+a}
  	\label{eq:maxG_pos}
  \end{equation}
  such that
  \begin{equation}
  	\begin{aligned}
  	  G_{a,b,c}^>&=N^{3+b}\max_{\tau\in[\ln N,\tau_1]}e^{(1+a-c)\tau}e^{Ne^{-\tau}}\left( \frac{2+a}{eN^2} \right)^{(2+a)}\\
  	  &=
  	  \begin{cases}
  	  	\OO(N^{(1+\delta_1)(1+a-c)-1+b-2a})\quad &\mathrm{if}\quad 1+a-c>0,\\
  	  	\OO(N^{b-c-a})&\mathrm{else.}
  	  \end{cases}
  	\end{aligned}
  \end{equation}
  When $\tau\leq\ln N$, the maximum is reached at $n_{\mathrm{max}}>N$.
  Since $n$ can be at most equal to $N$, we instead obtain
  \begin{equation}
  	G_{abc}^<=\max_{\tau}N^{1+b-a}e^{-\tau(1+c)}=N^{1+b-a-\delta_0(1+c)}.
  \end{equation}
  Combining these results, we arrive at \cref{eq:Gbound}.
\end{proof}

\subsection{Main proof}\label{app:main_lemmas}
To prove \cref{th:pure}, we prove the statement separately for short, intermediate, and late times.

\begin{lemma}[$Q$ is correct]\label{lm:Q}
  For any $\tau<\delta_0\ln N$, where $0<\delta_0<1$, $Q(\tau)$ converges to the exact solution $\Pex(\tau)$.
  \begin{equation}
  	\Delta_1(\tau) \ce | |Q(\tau)-\Pex(\tau)| |_1\Nto0.
  	\label{eq:delta1}
  \end{equation}
\end{lemma}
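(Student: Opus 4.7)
The plan is a Duhamel (variation-of-constants) estimate controlling the residual from substituting $\gamma_n^0$ for $\gamma_n$. Setting $\Delta P(\tau) \ce Q(\tau) - \Pex(\tau)$ and subtracting the two evolution equations yields $\partial_\tau \Delta P = \Gamma \Delta P + (\Gamma_0 - \Gamma) Q$ with $\Delta P(0) = 0$, hence
\begin{equation*}
\Delta P(\tau) = \int_0^\tau e^{\Gamma (\tau - s)} (\Gamma_0 - \Gamma) Q(s) \, ds.
\end{equation*}
Since $\Gamma$ has non-negative off-diagonals and zero column sums (because $\gamma_0 = 0$), it generates a stochastic semigroup; splitting an arbitrary signed vector into its positive and negative parts then gives $\|e^{\Gamma t} v\|_1 \leq \|v\|_1$, so $\Delta_1(\tau) \leq \int_0^\tau \|(\Gamma_0 - \Gamma) Q(s)\|_1 \, ds$.

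The core step is an explicit bound on the forcing term. Using $\gamma_n^0 - \gamma_n = (N - n + 1)(N - n)/N$ and the closed form of $Q$, together with the substitution $k \ce N - n$, $x \ce e^{-s}$, the two tridiagonal contributions combine algebraically into
\begin{equation*}
[(\Gamma_0 - \Gamma) Q(s)]_n = \frac{k x (1 - x)^{k-1}}{N} \bigl[(k + 1) x - 2\bigr].
\end{equation*}
Taking absolute values \emph{after} this combination (bounding each piece before combining would lose a factor of $e^s$) and using the elementary identities $\sum_{k \geq 1} k (1 - x)^{k-1} = 1/x^2$ and $\sum_{k \geq 1} k(k+1)(1 - x)^{k-1} = 2/x^3$ gives $\|(\Gamma_0 - \Gamma) Q(s)\|_1 \leq 4/(N x) = 4 e^s / N$. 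Integrating, $\Delta_1(\tau) \leq 4 (e^\tau - 1)/N$, which is $O(N^{\delta_0 - 1}) \to 0$ for $\tau \leq \delta_0 \log N$ with $\delta_0 < 1$.

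The chief obstacle I anticipate is spotting the cancellation just mentioned; without it, one is left with a forcing bound of order $e^{2 s}/N$ that integrates to $e^{2\tau}/N$ and only yields the lemma for $\delta_0 < 1/2$. A minor side issue is that $\gamma_0^0 = N + 1 \neq 0$, so the $Q$ dynamics leaks mass at $n = 0$; however, $Q_0(s)$ is doubly exponentially small in $N$ throughout $s \leq \delta_0 \log N$ by \cref{lm:vanishing_prob_mass}, so this defect is harmless and does not enter the final bound.
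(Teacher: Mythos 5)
Your proposal is correct and follows the same overall strategy as the paper: Duhamel, contractivity of $e^{\Gamma t}$ in the $1$-norm, and then an explicit estimate of $\|\Gamma_1 Q(s)\|_1$. Where you differ is in that last estimate. After combining the two tridiagonal contributions into the single factor $\frac{kx(1-x)^{k-1}}{N}[(k+1)x-2]$ (exactly the expression appearing in Eq.~\eqref{eq:argument}), the paper observes that this changes sign exactly once in $n$, splits the sum at $\bar n=\lfloor N+1-2e^\tau\rfloor$, and telescopes each monotone piece to a boundary term, yielding $\|\Gamma_1 Q\|_1\lesssim e^\tau/N$. You instead bound $|(k+1)x-2|\le (k+1)x+2$ and sum directly with the geometric-series identities, arriving at $4/(Nx)=4e^s/N$ — the same scaling, by a somewhat cleaner route that dispenses with locating the sign change and avoids the extra $\log N$ the paper picks up by pulling the max out of the time integral. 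Both versions hinge on the same algebraic cancellation: you rightly note that bounding each tridiagonal piece before combining gives $O(e^{2s}/N)$ and would only cover $\delta_0<1/2$. Your closing remark about the $n=0$ leak is accurate but a non-issue even beyond what you say: only $e^{\Gamma t}$ (not $e^{\Gamma_0 t}$) appears in the Duhamel bound, and the $k=N$ term is already included in your sum, so no separate accounting of $Q_0$ is needed.
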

\begin{proof}

Using \cref{eq:matrix_Q,eq:matrix_P} and defining 
$\Gamma_1=\Gamma_0-\Gamma$ with rates $\gamma^1_n=(n/N-1)(N-n+1)$, we have
\begin{equation}
  \Pex(\tau) = e^{\Gamma_0 \tau} \Pex(0) - \int_0^\tau d\tau' e^{\Gamma(\tau-\tau')} \Gamma_1 e^{\Gamma_0\tau'} \Pex(0).
\end{equation}
With this expression and \cref{eq:matrix_Q,eq:delta1} we obtain
\begin{equation}
  \Delta_1(\tau) \le \int_0^\tau d\tau' ||e^{\Gamma(\tau-\tau')} \Gamma_1 Q(\tau')||\leq \int_0^\tau d\tau' ||\Gamma_1 Q(\tau')||,
  \label{eq:delta1_expr}
\end{equation}
since $\exp(\Gamma\tau)$ is a stochastic matrix, which does not increase the norm.

To bound $\Delta_1$, we first evaluate the argument in \cref{eq:delta1_expr} (note $\gamma^1_N=\gamma^1_{N+1}=0$)
\begin{equation}
  \begin{aligned}
  	&||\Gamma_1Q(\tau)|| = \sum_{n=0}^{N-1}|\gamma_{n+1}^1Q_{n+1}(\tau)-\gamma_n^1Q_n(\tau)|\\
  	&=\sum_n\left|\frac{e^{-\tau}(1-e^{-\tau})^{N-n}(N-n)(N+1-n-2e^\tau)}{N(e^\tau-1)}\right|.
  \end{aligned}
  \label{eq:argument}
\end{equation}
For a given $\tau$, the term inside the absolute value signs in \cref{eq:argument} changes sign at $n=N+1-2e^\tau$. 
Thus we can split the sum up into two parts, one from $n=0$ to $n=\bar n \ce \lfloor N+1-2e^{\tau}\rfloor$, and the other from $n=\bar n+1$ to $n=N-1$.
$\Delta_1$ obeys the inequality
\begin{equation}
  \Delta_1(\tau)\le \Delta_{a}+\Delta_{b}
\end{equation}
where
\begin{equation}
  \label{Deltax}
  \Delta_x = \tau_0 \; \max_{\tau\le \tau_0}
  \sum_{n\in \mathfrak N_x}|\gamma^1_{n+1} Q_{n+1}(\tau) - \gamma^1_{n} Q_{n}(\tau)|
\end{equation}
with $x=a,b$, and $\mathfrak N_a=\{0,\ldots,\bar n\}$ and $\mathfrak N_b=\{\bar n+1,\ldots, N\}$.
In each part, all terms have the same sign ($+1$ in the first, $-1$ in second), so we can get rid of the magnitude sign such that only the boundary terms survive.

Considering first $\Delta_b$, we have 
\begin{equation}
  \Delta_b=\tau_0\max_{\tau\leq\tau_0}\gamma^1_{\bar n+1}Q_{\bar n+1}(\tau)
  =\tau_0\max_{\tau\leq\tau_0}\frac{2e^\tau(1-e^{-\tau})^{2e^\tau}(2e^\tau-1)}{(e^\tau-1)N}.
  \label{eq:Deltab}
\end{equation}
The right-hand side of \cref{eq:Deltab} is an increasing function of $\tau$, so we can replace $\tau$ by $\tau_0$, which yields
\begin{equation}
  \Delta_b=\frac{\tau_0}{N^{1-\delta_0}}2(1-N^{-\delta_0})^{2N^{\delta_0}}\frac{2N^{\delta_0}-1}{N^{\delta_0}-1}<\frac{\tau_0}{N^{1-\delta_0}}.
  \label{eq:Deltabbound}
\end{equation}
Clearly this vanishes for $N\to\infty$ for $\delta_0<1$ (see \cref{eq:parameters}).

Turning to $\Delta_a$, we have
\begin{equation}
\Delta_a =\tau_0\max_{\tau\leq\tau_0}(\gamma^1_{\bar n+1}Q_{\bar n+1}(\tau)-\gamma_0^1Q_0(\tau))\leq\Delta_b + \tau_0N(1-e^{-\tau_0})^N.
  \label{eq:Deltaa}
\end{equation}
Thus this vanishes as well, and therefore $\Delta_1(\tau)\to0$.

\end{proof}
\begin{lemma}[$R$ converges to $Q$ at short times]\label{lm:QtoR}
  For any $\tau<\ln N$, $R(\tau)$ converges to the exact solution $Q(\tau)$.
  \begin{equation}
  	\Delta_2(\tau) \ce | |Q(\tau)-R(\tau)| |_1\Nto0.
  	\label{eq:delta2}
  \end{equation}
\end{lemma}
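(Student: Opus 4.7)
The plan is to factor $R_n^<(\tau)=Q_n(\tau)\,h_n(\tau)$ with
\begin{equation*}
  h_n(\tau)=\left(\frac{N}{n}\right)^{2}(1-e^{-\tau})^{(N-n)^2/n},
\end{equation*}
so that $\Delta_2(\tau)=\sum_{n=0}^N Q_n(\tau)\,|h_n(\tau)-1|$. The strategy is then to show that $h_n(\tau)$ is uniformly close to $1$ on the support of $Q_n(\tau)$, while whatever mass falls outside that support is negligible.

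First I would split the sum at the cutoff $n^\star=N-N^\mu$ used in Lemma~\ref{lm:vanishing_prob_mass}, choosing any $\mu$ with $\delta_0<\mu<1$ (e.g.\ $\mu=4/5$ as in \cref{eq:parameters}). The contribution from $n\le n^\star$ is bounded by $\sum_{n\le n^\star}Q_n(\tau)+\sum_{n\le n^\star}R_n^<(\tau)$, which vanishes faster than any inverse polynomial in $N$ by Lemma~\ref{lm:vanishing_prob_mass}.

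In the bulk $n>n^\star$ I substitute $s=N-n\in[0,N^\mu]$, so that $N-s\ge N/2$ for large $N$. Using the elementary inequalities $|(1+v)^2-1|\le 3v$ for $v=s/(N-s)\in[0,1]$ and $|u^w-1|\le w|\log u|$ with $u=1-e^{-\tau}$ and $w=s^2/(N-s)$, together with $(1+v)^2\le 4$, one arrives at the pointwise bound
\begin{equation*}
  |h_{N-s}(\tau)-1|\le \frac{6\,s}{N}+\frac{8\,s^2}{N}\,|\log(1-e^{-\tau})|.
\end{equation*}
Combining this with the closed-form moments of the geometric $Q$, namely $\sum_s s\,Q_{N-s}(\tau)=e^\tau-1$ and $\sum_s s^2\,Q_{N-s}(\tau)\le 2e^{2\tau}$, and the estimate $|\log(1-e^{-\tau})|\le e^{-\tau}/(1-e^{-\tau})$, yields a bulk contribution of order $e^\tau/N$, which tends to zero whenever $\tau\le\delta_0\log N$ for some $\delta_0<1$—i.e.\ for any sequence of times with $\limsup\tau_N/\log N<1$.

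The main obstacle is making the bound uniform in $\tau$. For $\tau$ close to $\log N$ the geometric moments $e^{k\tau}$ are large, and one relies on the cancellation $|\log(1-e^{-\tau})|\cdot e^{2\tau}=O(e^\tau)$. For $\tau\to 0$, by contrast, $|\log(1-e^{-\tau})|$ diverges like $|\log\tau|$, and one must use that $\sum_s s^2 Q_{N-s}=O(\tau)$ vanishes linearly, so the product $\tau|\log\tau|/N$ still tends to zero. Stitching the small- and large-$\tau$ regimes together using these two estimates completes the proof.
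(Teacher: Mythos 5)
Your proposal is correct and follows essentially the same route as the paper: restrict to the bulk $s=N-n\le N^\mu$ via Lemma~\ref{lm:vanishing_prob_mass}, factor $R^<_n=Q_n\,h_n$, bound the pointwise relative error $|h_{N-s}-1|$ by terms linear and quadratic in $s$, and sum against the geometric weight $Q_{N-s}$. The one substantive refinement over the paper's version is that you replace the paper's informal $\OO$-expansion of $(1-e^{-\tau})^{s^2/(N-s)}$ (which requires an ad hoc further restriction to $s<N^{0.1}$ at small $\tau$) with the uniformly valid inequality $|u^w-1|\le w|\log u|$ and the closed-form geometric moments, which handles both the $\tau\to0$ endpoint (via $\tau|\log\tau|/N\to0$) and the $\tau$ near $\log N$ endpoint (via $e^{2\tau}|\log(1-e^{-\tau})|=\OO(e^\tau)$) in a single clean estimate.
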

\begin{proof}
Using \cref{eq:delta2,def:Q,def:R}, we have
\begin{equation}
  \Delta_2=\sum_n e^{-\tau}(1-e^{-\tau})^{N-n}\left|1-\frac{N^2}{n^2}(1-e^{-\tau})^{(N-n)(N/n-1)}\right|.
  \label{eq:Q-R}
\end{equation}
Taking a specific time $\tau=\bar\tau\ln N$, we can use \cref{lm:vanishing_prob_mass} to restrict the sum over $n$ to the range $n>N-N^{\bar\tau}$.
Let $s=N-n$. Then
\begin{equation}
  \begin{aligned}
  	\Delta_2&=\sum_{s=0}^{N^{\bar\tau}} e^{-\tau}(1 - e^{-\tau})^s \left|1 - \frac{N^2}{(N-s)^2} (1-e^{-\tau})^{\frac{s^2}{N-s}}\right|\\
  	&=\sum_{s=0}^{N^{\bar\tau}} e^{-\tau}(1 - e^{-\tau})^s \left|1 - (1+\OO(s/N))\left[1+\OO\left(\frac{s^2}{N-s}e^{-\tau}\right)\right]\right|\\
  	&=\OO\left[\max_{s\in[0,N^{\tilde\tau}]}\left(\frac sN+\frac{s^2}{N-s}e^{-\tau}\right)\right],
  \end{aligned}
  \label{eq:other_part}
\end{equation}
where in the last line we used that the sum over $e^{-\tau}(1-e^{-\tau})^s$ is bounded 1.
For shorter times, we can arbitrarily restrict the sum to $s<N^{0.1}$ and the bound still works.
Thus, $\Delta_2\to0$ for all $\tau<(1-\eps)\ln N$ for any constant $\eps>0$.

\end{proof}

\begin{lemma}[Equivalence of $R$ and $\Pcont$]\label{lm:Rtilde}
  If $\tau>\tau_0$,
  \begin{equation}
  	\Delta_3(\tau)\ce| |R(\tau)-\Pcont(\tau)| |\Nto0.
  	\label{eq:RtoRtilde}
  \end{equation}
\end{lemma}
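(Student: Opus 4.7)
The plan is to exploit the fact that $R_n^<$ and $\Pcont_n$ differ only through the factor $(1-e^{-\tau})^{N(N-n)/n}$ versus $\exp[-e^{-\tau} N(N-n)/n]$. These should be close when $e^{-\tau}$ is small---precisely the situation for $\tau > \tau_0 = \delta_0 \log N$, where $e^{-\tau} \leq N^{-\delta_0}$. The whole argument should reduce to a Taylor estimate followed by an application of \cref{lm:G}.

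First, I would write out the ratio
\begin{equation*}
f_n(\tau) \coloneqq R_n(\tau)/\Pcont_n(\tau) = \exp\!\left[\frac{N(N-n)}{n}\bigl(\log(1-e^{-\tau}) + e^{-\tau}\bigr)\right].
\end{equation*}
Since $\log(1-x)+x = -\sum_{k\geq 2} x^k/k \leq 0$ for $x \in [0,1)$, one has $f_n \leq 1$, so $|R_n-\Pcont_n| = \Pcont_n(1 - f_n)$ with no absolute-value subtleties. Combining the Taylor estimate $|\log(1-x)+x| \leq x^2/(2(1-x))$ with the elementary inequality $1 - e^{-a} \leq a$ for $a \geq 0$ (applied to $a = -\log f_n$), and using $N-n \leq N$ together with $1/(1-e^{-\tau}) \leq 2$ for $N$ large, I obtain
\begin{equation*}
1 - f_n(\tau) \leq C\,\frac{N^2}{n}\,e^{-2\tau}
\end{equation*}
for some absolute constant $C$.

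Second, I would sum and use $\sum_n(\cdot) \leq N\max_n(\cdot)$ together with \cref{def:G} to bound the 1-norm difference:
\begin{equation*}
\Delta_3(\tau) \leq C\,e^{-2\tau}\sum_{n=1}^N \Pcont_n(\tau)\,\frac{N^2}{n} \leq C\,G_{1,2,2}.
\end{equation*}
Applying \cref{lm:G} with $(a,b,c)=(1,2,2)$---which falls into the second case since $1+a-c = 0$---yields $G_{1,2,2} = \OO(N^{2-3\delta_0} + N^{-1})$. With $\delta_0 = 3/4$ from~\eqref{eq:parameters} this is $\OO(N^{-1/4}) \to 0$, establishing the claim.

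The only mildly delicate point is the regime of very small $n$, where the prefactor $N^2/n$ is large; there, however, $\Pcont_n$ is doubly-exponentially suppressed through the $\exp[-e^{-\tau}N^2/n]$ factor, and this tradeoff is exactly what \cref{lm:G} is built to handle by taking the $n$-maximum of the full product. I therefore expect no genuine obstacle beyond careful bookkeeping at the endpoints of the interval $\tau \in [\tau_0, \tau_1]$ and at $n=0$, where both $R_0$ and $\Pcont_0$ are taken to be zero.
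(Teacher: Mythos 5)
Your proof is correct, and it is cleaner and more uniform than the paper's. The paper also reduces the problem to bounding $\sum_n \Pcont_n(\tau)$ times a small factor and then invokes \cref{lm:G}, but it gets there differently: it writes $R_n/\Pcont_n-1$ as $(1+\theta_1 e^{-\tau})^{N(N/n-1)}-1$ and bounds this by $\OO(\theta_1e^{-\tau}N(N/n-1))$, a linearization that is only obviously valid when the exponent times $\theta_1$ is small, and which therefore forces the paper into a three-way case split over $\tau$ (with a further split of the sum over $n$ in the third case). Your observation that $(1-x)e^{x}\le 1$ for $x\in[0,1)$, so that $f_n\le 1$ and hence $|R_n-\Pcont_n|=\Pcont_n(1-f_n)\le\Pcont_n\cdot(-\log f_n)$, sidesteps this entirely: the inequality $1-e^{-a}\le a$ holds for all $a\ge 0$, so the resulting bound $1-f_n\le CN^2e^{-2\tau}/n$ is valid uniformly and feeds directly into $G_{1,2,2}$, which \cref{lm:G} shows is $\OO(N^{2-3\delta_0}+N^{-1})\to0$ for $\delta_0=3/4$. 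Two small bookkeeping remarks: (i) your bound requires $1-e^{-\tau}\ge 1/2$, which holds for $N$ large since $e^{-\tau}\le N^{-\delta_0}$ on the relevant range; (ii) the lemma should be read as covering $\tau\in(\tau_0,\tau_1]$, since $R=R^<$ only on that interval and $G_{a,b,c}$ is defined with $\tau\in[\tau_0,\tau_1]$ — you implicitly use this when invoking \cref{lm:G}, which is consistent with how the lemma is used in the overall proof of \cref{th:pure}.
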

\begin{proof}
  Bounding the residue from Taylor's theorem, we have $\exp(-e^{-\tau})=1-e^{-\tau}+\theta_1$, with $|\theta_1|<e^{-2\tau}$, such that
\begin{equation}
  \begin{aligned}
  	\Delta_3(\tau)
  	&=\sum_{n=0}^N\Pcont_n(\tau)
  	\left|\left( 1+\theta_1e^{-\tau} \right)^{N(N/n-1)}-1\right|\\
  	&<\sum_{n=0}^N\Pcont_n(\tau)\OO(\theta_1 e^{-\tau}N(N/n-1)).
  	\end{aligned}
  \label{eq:rtilde}
\end{equation}
We distinguish three cases. (i) If $\tau>(1+\eps)\ln N$ for any $\eps>0$, then $N^2e^{-2\tau}\to0$ and this expression vanishes.
(ii) If $(1-\eps)\ln N<\tau<(1+\eps)\ln$ we can use \cref{lm:G} to bound \cref{eq:rtilde} by $N^{2-3(1-\eps)}$, which vanishes.
(iii) If $\tau<(1-\eps)\ln N$, we split the sum into two parts, one up to $N-N^\mu$ and the other from $N-N^\mu$ to $N$ (for some $0<\mu<1$).
For the first part, we have 
\begin{equation}
  \begin{aligned}
  	\sum_{n=0}^{N-N^\mu}|R_n-\Pcont_n|&<N^{3}\max_{n\leq N^\mu}e^{-2\tau}\alpha_n
  	\exp\left(-\alpha_n\right)\\
  	&<N^{3-2\delta_0}N^{\mu-\delta_0}\exp(-N^{\mu-\delta_0}),
  \end{aligned}
  \label{eq:bound1}
\end{equation}
where in the first line we introduced $\alpha_n=e^{-\tau}N(N/n-1)$ and to go to the second line we first used that the maxmium with $\alpha$ of $\alpha e^{-\alpha}$ is at $\alpha=1$, but that $\alpha\leq N^{\mu-\delta_0}$ to replace $\alpha=N^{\mu-\delta_0}$, where we also used that $e^{-\tau}\leq N^{-\delta_0}$. \Cref{eq:bound1} vanishes for $\mu>\delta_0$.
For the second part, we have 
\begin{equation}
  \begin{aligned}
  	\sum_{s=0}^{N^\mu}(R_{N-s}-\Pcont_{N-s})&<N^{2\mu-3\delta_0},
  \end{aligned}
  \label{eq:bound2}
\end{equation}
which can be made to vanish, too, by taking $\mu<3\delta_0/2$ (consistent with \cref{eq:parameters}).
\end{proof}
\begin{lemma}[$\Pcont$ is asymptotically correct.]\label{lm:continuum_works}
  For times $\tau_0<\tau<\tau_1$, $\Pcont$ converges to $\Pex$
  \begin{equation}
  	\Delta_4(\tau) = | |\Pex(\tau)-\Pcont(\tau)| |\to0.
  	\label{eq:delta3}
  \end{equation}
\end{lemma}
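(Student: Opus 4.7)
The plan is to compare $\Pex$ and $\Pcont$ on $[\tau_0,\tau]$ via a Duhamel identity, using the three previous lemmas to seed the argument at $\tau_0$. Let $\tilde\Gamma$ denote the linear operator on $\mathbb R^{N+1}$ whose action on $\Pcont$ is defined componentwise by $(\tilde\Gamma\Pcont)_n=\partial_n[n(N-n)\Pcont_n]/N$, computed from the closed form in \cref{def:Rtilde} and then evaluated at integer $n$. By construction $\partial_\tau\Pcont=\tilde\Gamma\Pcont$, so the residual $\eps(\tau)\ce\Pex(\tau)-\Pcont(\tau)$ obeys $\partial_\tau\eps=\Gamma\eps+(\Gamma-\tilde\Gamma)\Pcont$. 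Since $e^{\Gamma t}$ is stochastic and hence $1$-norm contracting, Duhamel gives
\begin{equation*}
  \Delta_4(\tau)\le||\eps(\tau_0)||_1+\int_{\tau_0}^{\tau}||(\Gamma-\tilde\Gamma)\Pcont(\tau')||_1\,d\tau'.
\end{equation*}

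The boundary term vanishes by the triangle inequality together with \cref{lm:Q,lm:QtoR,lm:Rtilde} applied at $\tau_0$ (which lies in the overlap of their ranges of validity for the parameter choice in \eqref{eq:parameters}); the three pieces bound $||\Pex(\tau_0)-Q(\tau_0)||_1$, $||Q(\tau_0)-R(\tau_0)||_1$, and $||R(\tau_0)-\Pcont(\tau_0)||_1$ respectively.

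For the integrand I would control $(\Gamma-\tilde\Gamma)\Pcont$ pointwise in $n$ by Taylor-expanding the discrete flux difference $\gamma_{n+1}\Pcont_{n+1}-\gamma_n\Pcont_n$ around its continuum counterpart. This produces a second-derivative Taylor remainder $\tfrac12\partial_n^2[n(N-n)\Pcont_n]/N$ together with an $\Pcont_n/N$ correction absorbing the $+1$ shift in $\gamma_n=n(N-n+1)/N$. Differentiating the closed form for $\Pcont_n$ twice yields $\Pcont_n$ multiplied by rational functions of $n$, $N$, and $e^{-\tau}$, each term of which can be bounded uniformly in $n$ by an instance of $G_{a,b,c}$ from \cref{def:G}, using the estimates in \cref{lm:G}.

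The main technical obstacle is that $[\tau_0,\tau_1]$ has length $\OO(\log N)$, so the pointwise bound on $||(\Gamma-\tilde\Gamma)\Pcont(\tau')||_1$ must decay faster than $1/\log N$ uniformly in $\tau'$ in order for the integral to be negligible. This forces the same two-regime case analysis already used inside \cref{lm:G}: for $\tau'\le\log N$ the mass of $\Pcont$ concentrates near $n=N$ and the smallness comes through the exponent $1+b-a-\delta_0(1+c)$ being negative thanks to $\delta_0<1$; for $\log N<\tau'\le\tau_1$ it concentrates near $n_{\max}\propto N^2 e^{-\tau'}$ and smallness requires $\delta_1<1$ so that the interior exponent $(1+\delta_1)(1+a-c)-1+b-2a$ is also negative. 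With the parameters in \eqref{eq:parameters} every resulting term is strictly polynomially small, hence $\Delta_4(\tau)\to0$ uniformly in $\tau\in[\tau_0,\tau_1]$.
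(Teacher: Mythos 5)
Your proof follows essentially the same route as the paper: a Duhamel bound exploiting the $1$-norm contractivity of $e^{\Gamma t}$, with the integrand decomposed into the Taylor remainder of the discrete flux plus the $\partial_n[n\Pcont_n]/N$ correction from the $+1$ in $\gamma_n=n(N-n+1)/N$, all estimated term by term via $G_{a,b,c}$ from \cref{lm:G}. You are in fact more careful than the paper in one spot: you explicitly retain and bound the boundary term $\|\Eps(\tau_0)\|_1$ via \cref{lm:Q,lm:QtoR,lm:Rtilde}, whereas the paper's inequality \eqref{eq:error_bound_Eps} silently drops the $e^{\Gamma(\tau-\tau_0)}\Eps(\tau_0)$ contribution. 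One small imprecision: you attribute the smallness in the two regimes to ``$\delta_0<1$'' and ``$\delta_1<1$,'' but the worst terms in \eqref{eq:bound_U} actually force the tighter constraints $\delta_0>2/3$ (from $G_{3,4,2}$, exponent $2-3\delta_0$) and $\delta_1<1/2$ (from $G_{1,0,0}$, exponent $2\delta_1-1$); the chosen $\delta_0=3/4$, $\delta_1=2/5$ do satisfy these, so your conclusion stands.
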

\begin{proof}
We establish $\Delta_4\to0$ by bounding the difference between the continuum solution (\cref{def:Rtilde}) and the discrete exact solution (\cref{def:P}). 
The exact discrete solution fulfils \cref{eq:p_n},
whereas the continuum solution obeys
\begin{equation}
  \partial_\tau\Pcont_n(\tau) = \partial_n\{[n(N-n)/N]\Pcont_n(\tau)\}.
  \tag{\ref{eq:pn_continuum}, restated}
\end{equation}
We replace the differential by a first-order finite difference with error $U$
\begin{equation}
  \partial_n(\gamma_n\Pcont_n(\tau))=\gamma_{n+1}\Pcont_{n+1}(\tau)-\gamma_n\Pcont_n(\tau)+U_n(\tau),
  \label{eq:replacing_derivative}
\end{equation}
By the remainder theorem applied to the variable $n$, the residue is bounded by
\begin{equation}
  |U_n(\tau)| < \max_{n\leq m\leq n+1}\left( \frac{|H_2(m,\tau)|}{2} \right).
  \label{eq:taylor_error}
\end{equation}
where 
\begin{equation}
  H_2(m,\tau)=\partial_m^2(\gamma_m\Pcont_m(\tau)).
  \label{eq:H2}
\end{equation}
We now bound the error $\Eps=\Pex-\Pcont$.
It is governed by
\begin{equation}
  \begin{aligned}
  	\partial_\tau \Eps_n(\tau) &= -\gamma_n\Pex_n(\tau) + \gamma_{n+1}\Pex_{n+1}(\tau) -\partial_n\{[\gamma_n-n/N]\Pcont_n(\tau)\}\\
  	&=-\gamma_n\Eps_n(\tau)+\gamma_{n+1}\Eps_{n+1}(\tau)+U_n(\tau)+\partial_n[n \Pcont_n(\tau)]/N\\
  	&=-\gamma_n\Eps_n(\tau)+\gamma_{n+1}\Eps_{n+1}(\tau)+\Eps_n^{\mathrm{in}}(\tau).
  \end{aligned}
  \label{eq:difference_evolution}
\end{equation}
Note that in the last line we defined 
\begin{equation}
  \Eps_{n}^{\mathrm{in}}(\tau) \ce U_n(\tau)+\Pcont_n(\tau)/N + n\partial_n\Pcont_n(\tau)/N,
  \label{eq:Epsin}
\end{equation}
with $\Eps_0^{\mathrm{in}}(\tau)=U_0(\tau)=\gamma_1\Pcont_1(\tau)$.
This yields a bound on $\Eps$
\begin{equation}
  \begin{aligned}
  	||\vec\Eps(\tau)|| &\leq \int_{\tau_0}^\tau d\tau' ||e^{\Gamma(\tau-\tau')}\vec\Eps_{\mathrm{in}}(\tau')||
  	\leq\int_{\tau_0}^\tau d\tau'||\vec\Eps_{\mathrm{in}}(\tau')||\\
  	&<(\tau_1-\tau_0)\max_{\tau_0\leq\tau\leq\tau_1}| |\vec\Eps_{\mathrm{in}}(\tau)| |.
  \end{aligned}
  \label{eq:error_bound_Eps}
\end{equation}

We bound the terms in $\Eps^{\mathrm{in}}$ individually.
The contribution due to the Taylor residue is
\begin{equation}
  \begin{aligned}
  	\max_{\tau_0\leq\tau\leq\tau_1}| |U(\tau)| | &\leq \max_{\tau}\sum_n\max_{n\leq m\leq n+1}\left( \frac{|H_2(m,\tau)|}{2} \right)\\
  &<N\max_{\tau}\max_n\frac{|H_2(n,\tau)|}{2}.
  \end{aligned}
  \label{eq:taylor_residue}
\end{equation}
Considering first the $U_0=\Pcont_1(\tau)$ term, we have 
\begin{equation}
  \Pcont_1(\tau)=N^2e^{-\tau}(1-e^{-\tau})^{N(N-1)}\to 0,\quad\text{if }\tau<2\ln N,
  \label{eq:n0Term}
\end{equation}
so we can take $n\geq1$ in the following.

To bound $H_2$, we first note
\begin{equation}
  \begin{aligned}
  	H_2(n,\tau)=\frac{\Pcont_n(\tau)}{2}\left( \frac{2+\frac2N}{n}+\frac{N^3(N-n+1)}{n^3e^{2\tau}}+\frac{2N(n-2-2N)}{e^{\tau}n^2} \right).
  \end{aligned}
  \label{eq:H2_expression}
\end{equation}
Taking the maximum over $n$ and $\tau$ lets us bound the corresponding contribution to $\Eps_{\mathrm{in}}$ in terms of $G$ (see \cref{def:G})
\begin{equation}
  \max_\tau| |U| |<\max_\tau(G_{1,0,0} + G_{3,4,2} + G_{2,3,1} + G_{1,1,1} + G_{2,2,1}).
  \label{eq:bound_U}
\end{equation}
Using \cref{lm:G}, we find that these terms all vanish as long as $\delta_1<1/2$ and $\delta_0>2/3$ (consistent with \cref{eq:parameters}).

We also have to consider the second part of $\Eps_{\mathrm{in}}$ that stems from the error introduced in the rate equation. In terms of $G$, we have
\begin{equation}
  \begin{aligned}
  	&\int_{\tau_0}^{\tau_1}| |\partial_n[\frac{n}{N}\Pcont_n(\tau)]| |<\ln(N)(G_{0,-1,0}+G_{1,0,0}+G_{2,2,1})\to0
  \end{aligned}
  \label{eq:second_part_of_Eps_in}
\end{equation}
again using \cref{lm:G}.
Thus, all terms in $\vec\Eps_{\mathrm{in}}$ vanish, which implies that $| |\vec\Eps| |\to0$ as $N\to\infty$.
\end{proof}

\begin{lemma}[$R^>$ is asymptotically correct.]\label{lm:late_times}
  For times $\tau>\tau_1$, $R^>$ converges to $\Pex$
  \begin{equation}
  	\Delta_5(\tau) = | |\Pex(\tau)-R^>(\tau)| |\Nto0.
  	\label{eq:delta4}
  \end{equation}
\end{lemma}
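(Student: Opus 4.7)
The plan is to mirror the structure of \cref{lm:Q}, now using the late-time linearisation $\gamma_n\to\gamma_n^1\ce n$, which is accurate in the support of the distribution after the burst, where $n\ll N$. The central observation is that the expression $R^>$ in \cref{eq:R>} is exactly the solution of the linearised master equation $\partial_\tau Y_n = -n Y_n + (n+1)Y_{n+1}$ starting from $Y(\tau_1)=R^<(\tau_1)$: under $\Gamma_1^>$ every excitation decays independently at unit rate, so the propagator acts via the Binomial kernel appearing in \cref{eq:R>}.

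I then introduce the auxiliary $Y(\tau)\ce\exp(\Gamma_1^>(\tau-\tau_1))\Pex(\tau_1)$ and decompose
\begin{equation}
\Delta_5(\tau) \le \|R^>(\tau) - Y(\tau)\|_1 + \|Y(\tau) - \Pex(\tau)\|_1.
\end{equation}
The first term is handled by contractivity. Since $\Gamma_1^>$ is a valid Markov generator (nonnegative off-diagonals, columns summing to zero), the propagator $\exp(\Gamma_1^> t)$ is a stochastic matrix and therefore 1-norm contractive on any vector, so $\|R^>(\tau)-Y(\tau)\|_1\le\|R^<(\tau_1)-\Pex(\tau_1)\|_1$, which vanishes by the triangle inequality with $\Pcont(\tau_1)$ as a bridge using \cref{lm:Rtilde,lm:continuum_works}. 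For the second term, Duhamel's identity combined with contractivity of $\exp(\Gamma t)$ gives
\begin{equation}
\|Y(\tau)-\Pex(\tau)\|_1 \le \int_{\tau_1}^{\tau}\|(\Gamma-\Gamma_1^>)Y(s)\|_1\,ds,
\end{equation}
in which only the correction rates $\gamma_n - n = -n(n-1)/N$ appear.

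The main obstacle is controlling this integrand sharply enough that its time integral vanishes. Writing $a_n\ce n(n-1)/N$, the integrand equals the total variation $\sum_n|a_nY_n(s)-a_{n+1}Y_{n+1}(s)|$. A crude triangle-inequality bound of $2\langle n^2\rangle_{Y(s)}/N$ is too weak: since $\langle n^2\rangle_{\Pex(\tau_1)}\sim N^{8/5}\log N$, the corresponding time integral would diverge. The sharpening exploits the splitting
\begin{equation}
|a_n Y_n - a_{n+1} Y_{n+1}| \le a_n|Y_n - Y_{n+1}| + (2n/N)Y_{n+1}.
\end{equation}
The second piece integrates to a contribution of order $\langle n\rangle_{\Pex(\tau_1)}/N\sim N^{-2/5}\log N$, using the elementary identity $\langle n\rangle_{Y(s)}=\langle n\rangle_{\Pex(\tau_1)}e^{-(s-\tau_1)}$ that follows from $\dot{\langle n\rangle}=-\langle n\rangle$ under $\Gamma_1^>$. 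For the first piece, using that $Y(s)$ remains approximately unimodal---a property inherited from $\Pex(\tau_1)\approx\Pcont(\tau_1)$ and preserved by the regularising Binomial propagator---summation by parts around the mode $n^*(s)$ of $Y(s)$ yields $\sum_n a_n|Y_n-Y_{n+1}|\le 2a_{n^*(s)}Y_{n^*(s)}+4\langle n\rangle_{Y(s)}/N$; the first summand is of order $N^{-2/5}$ at $s=\tau_1$ (as $n^*\sim N^{3/5}$ with peak value $\sim N^{-3/5}$) and decays with $s-\tau_1$, so its time integral vanishes as $N\to\infty$. Combined with the bound on $\|R^>-Y\|_1$, this gives $\Delta_5(\tau)\to 0$, completing the chain of lemmas that establish \cref{th:pure}.
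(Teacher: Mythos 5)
Your outline coincides with the paper's: Duhamel plus $1$-norm contractivity of $e^{\Gamma t}$ reduce $\Delta_5$ to the error $\|\Pex(\tau_1)-R^<(\tau_1)\|_1$ at the matching time, which is controlled by \cref{lm:Rtilde,lm:continuum_works}, plus $\int_{\tau_1}^\tau\|\Gamma_1(\cdot)\|_1\,ds$ with $\gamma_n^1=n(n-1)/N$. You also correctly identify that the naive triangle-inequality bound $\sim\langle n^2\rangle/N$ fails, so the argument hinges entirely on exploiting cancellation inside the total-variation sum.

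That is exactly where a genuine gap opens. You bound $\sum_n a_n|Y_n-Y_{n+1}|$ by summation by parts around the mode, which requires $Y(s)=e^{\Gamma_1^>(s-\tau_1)}\Pex(\tau_1)$ to be unimodal. But unimodality of $\Pex(\tau_1)$ is not established anywhere, and it does not follow from $\|\Pex(\tau_1)-\Pcont(\tau_1)\|_1\to0$: $1$-norm closeness to a unimodal density says nothing about sign changes of successive differences. Preservation of unimodality under the binomial propagator is also stated without argument. Finally, the decay of $a_{n^*(s)}Y_{n^*(s)}(s)$ in $s$ is asserted without a proof, and it must be integrable over an unbounded time interval, so ``decays'' needs to be quantified. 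The paper avoids all three issues by propagating the \emph{explicit} distribution $R^<(\tau_1)$ rather than $\Pex(\tau_1)$: after inserting the closed form of $R^>$ it verifies directly [\cref{eq:square_bracket_bound}] that $(m+1)R^<_{m+1}(\tau_1)-(n-1)R^<_m(\tau_1)>0$, so the absolute-value signs can be removed, the sum telescopes to a boundary term plus a small correction $g_n$, and the resulting integral is $O(N^{-\delta_1})$. To repair your proof you would either need to prove unimodality of $\Pex$ and its stability under the binomial kernel, or, more simply, adopt the paper's tactic of telescoping against the explicit $R^<$ where positivity can be checked by hand.
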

\begin{proof}
  From \cref{lm:Rtilde,lm:continuum_works}, we know that $R_n^<(\tau_1)$ is a good approximation with the error vanishing, where $\tau_1=(1+\delta_1)\ln N$.
Applying now the linearized equations of motion, we obtain the solution (same as \cref{eq:R>})
\begin{equation}
  \label{Stau}
  R^>(\tau) = e^{\Gamma_S(\tau-\tau_1)}R^<(\tau_1),
\end{equation}
where $\Gamma_S$ is the same matrix as $\Gamma$ but with the replacement $\gamma_n \to \gamma_n^S=n$.
The difference between those two matrices, $\Gamma_1=\Gamma_S-\Gamma$, has rates $\gamma^1_n = n(n-1)/N$. 
In terms of $\Gamma_1$, we have
\begin{equation}
  e^{\Gamma (\tau-\tau_1)}R^<(\tau_1)=e^{\Gamma_S (\tau-\tau_1)} R^<(\tau_1) - \int_{\tau_1}^\tau d\tau' e^{\Gamma(\tau-\tau')} \Gamma_1 e^{\Gamma_S\tau'} R^<(\tau_1).
\end{equation}
Using (\ref{Stau}) and the definition, we have 
\begin{equation}
  \begin{aligned}
  	\Delta_5(\tau) &\le \int_{\tau_1}^\tau d\tau' ||e^{\Gamma(\tau-\tau')} \Gamma_1 R^>(\tau')||
  	= \int_{\tau_1}^\tau d\tau' ||\Gamma_1 R^>(\tau')||
  \end{aligned}
  \label{eq:late_error}
\end{equation}
since $\exp(\Gamma\tau)$ is a stochastic matrix, and thus it does not change the norm.

To bound the integral, we first evaluate the argument
\begin{equation}
  \begin{aligned}
  	&||\Gamma_1R^>(\tau)|| = \sum_{n=0}^{N-1}\frac{n}{N}\left|(n+1)R^>_{n+1}(\tau)-(n-1)R^>_n(\tau)\right|\\
  	&\qquad+(N-1)R^>_N(\tau)\\
  	&=\sum_{n=0}^{N-1}
  \left|\frac{n}{N}\sum_{m=n}^{N-1}\mat{m\\n}(e^{\tau-\tau_1}-1)^{m-n}e^{-m(\tau-\tau_1)}\right.\\
  &\qquad\times\left[ (m+1)R_{m+1}^<(\tau_1)-(n-1)R_m^<(\tau_1) \right] - g_n\Big|\\
  &\qquad+N^{-\delta_1}e^{-N(\tau-\tau_1)},
  \end{aligned}
  \label{eq:late_argument}
\end{equation}
where
\begin{equation}
  g_n=\mat{N\\n}\frac nN(n-1)e^{-N(\tau-\tau_1)}(e^{\tau-\tau_1}-1)^{N-n}R_N(\tau_1).
  \label{eq:gn}
\end{equation}
First, we notice that the term in square brackets in \cref{eq:late_argument} is always positive, as we can bound it from below by 
\begin{equation}
  \begin{aligned}
  	&(m+1)R_{m+1}^<(\tau_1)-(n-1)R_m^<(\tau_1)\\
  	&\geq2R_{m+1}(\tau_1)+(n-1)(R_{m+1}(\tau_1)-R_m(\tau_1))\\
  	&=R_{m+1}(\tau_1)\left\{2+ (n-1)\left[1-\left(1+\frac 1m\right)^2\exp\left( -\frac{N^2e^{-\tau_1}}{m(m+1)}\right) \right] \right\}\\
  	&\geq R_{m+1}(\tau_1)\left\{2+ (n-1)\left[1-\left(1+\frac 1n\right)^2 \right] \right\}\\
  	&=R_{m+1}(\tau_1)(1/n+1/n^2)>0.
  \end{aligned}
  \label{eq:square_bracket_bound}
\end{equation}
Second, we find
\begin{equation}
  K \ce \sum_n g_n = (N-1)e^{\tau_1-2\tau}< N^{-\delta_1}
  \label{eq:gn_sum}
\end{equation}
Thus we can drop the absolute value signs from \cref{eq:late_argument} (since the difference in square brackets is positive)
while incurring an error of at most $2K$.
The sum over $n$ then reduces to the boundary term $N(N-1)R^>_N(\tau)$.
Thus, we have 
\begin{equation}
  \begin{aligned}
  	| |\Gamma_1 R^>(\tau)| |
  	&\le
  	(N-1)R^>_N(\tau)+2K+N^{-\delta_1}e^{-N(\tau-\tau_1)}.
  \end{aligned}
  \label{eq:bound2409}
\end{equation}
The integral
\begin{equation}
  \begin{aligned}
  	\int_{\tau_1}^\tau &d\tau'| |\Gamma_1R^>(\tau')| |
  	\le \frac{N-1}{N}e^{-\tau_1}(1-e^{-N(\tau-\tau_1)})\\
  	&+\frac{N-1}{2}\left( e^{-\tau_1}-e^{\tau_1-2\tau} \right)
  	+N^{-1-\delta_1}(1-e^{-N(\tau-\tau_1)})\\
  	&<N^{-\delta_1}
  \end{aligned}
  \label{eq:delta4_integral}
\end{equation}
vanishes as $N\to\infty$ for any $\tau>\tau_1$.
\end{proof}

Finally, we prove the corollary that states that the formulae for magnetization and radiance are correct.
\begin{L2}[Magnetization and radiance (restated)]\label{lm:mag_app}
  The predicted magnetization and radiance~\eqref{eq:mu_and_rho} are asymptotically correct for any sequence of times 
  $\{\tau_n > 0: n \in \mathbb{N}\}$
  \begin{subequations}
  	\begin{align}
  	  \lim_{N\to\infty}\mu^{\mathrm{exact}}(\tau_N)-\mu(\tau_N)&\to0,\\
  	  \lim_{N\to\infty}\rho^{\mathrm{exact}}(\tau_N)-\rho(\tau_N)&\to0
  	\end{align}
  \end{subequations}
\end{L2}
\begin{proof}
  At early times ($\tau<\tau_0$), we can evaluate the magnetization explicitly using the distribution $Q$ (\cref{lm:Q})
  \begin{equation}
  	\sum_{s=0}^N(N-s)e^{-\tau}(1-e^{-\tau})^s=N-e^\tau+1-\OO( (1-e^{-\tau})^{N-1} ).
  	\label{eq:explicit_mag}
  \end{equation}
  Expanding $H(z)$ around $z=\infty$, we find $H(z)=e^{-z}(z^{-1}-z^{-2}+2z^{-3}+\cdots)$. Therefore, \cref{eq:mu_and_rho} becomes $\mu(\tau)=1-e^\tau/N+e^{2\tau}/N^2+\cdots$.
  Thus, the error in the magnetization is $\OO(1/N, e^{2\tau}/N^2)$.
  Clearly, the same is true for the radiance.

  For intermediate times, we bound the difference between the integral and the sum, $E^{\mathrm{(mag)}}\ce \int_{0}^Ndn\,n\Pcont_n(\tau) - \sum_{n=0}^Nn\Pcont_n(\tau)$.
  Using Taylor's theorem to bound the residue of the zeroth order expansion, we obtain
  \begin{equation}
  \frac{|E^{\mathrm{(mag)}}|}{N}\leq \sum_n \Pcont_n(\tau)\left(\frac 1N+\frac{e^{-\tau}N}{n^2}\right)
  	\label{eq:taylor_diff}
  \end{equation}
  This vanishes by \cref{lm:G}.
  The same steps for the radiance yields the same error bound, but with an additional contribution $\sum_n(N/n)e^{-\tau}\Pcont_n$, which also vanishes by \cref{lm:G}.

  Finally, to cover late times, we prove that at time $\tau_1$, both the exact magnetization and $\mu$ \eqref{eq:mu_and_rho} vanish as $N\to\infty$.
  Specifically, we show that the probability mass of the solution for $n>N^\mu$ for $\mu<1-\delta_1$ vanishes, even after multiplying it by $N^k$ with $k$ constant.
  \begin{equation}
  	\sum_{n=N^\mu}^N N^k \Pcont_n(\tau_1)< \sum_{n=N^\mu}^N N^{1+k}\exp(-N^{1-\delta_1-\mu})\to0.
  	\label{eq:prob_mass_3}
  \end{equation}
\end{proof}

\begin{T1}[Superradiant decay from all-inverted state (restated)]
  For any sequence of times $\{\tau_n > 0: n \in \mathbb{N}\}$, 
  \begin{equation}
  	\lim_{N\to \infty} ||\Pex(\tau_N) - R(\tau_N) ||_1 = 0.
  	\tag{\ref{eq:convergence}, restated}
  \end{equation}
\end{T1}
\begin{proof}
  Together with the triangle inequality, \cref{lm:Q,lm:QtoR} establish \cref{eq:convergence} for times $\tau_N<\ln N$. Similarly, \cref{lm:Rtilde,lm:continuum_works} together show that \cref{eq:convergence} holds between $\tau_0$ and $\tau_1$, and \cref{lm:late_times} proves that \cref{eq:convergence} is true for $\tau_N\geq\tau_1$.
\end{proof}

\section{Proof of \cref{th:loss}}
\subsection{Preliminaries}
\begin{definition}[Exact solution $\PPex$]\label{def:Pnr}
  The exact solution is given by the probability vector $\PPex_{n,r}(\tau)$, where $r\in\{0, \cdots, \floor{N/2}\}$ and $n\in\{0,\cdots, \tildeN \}$.
  The probabilities $\PPex_{n,r}$ obey $\PPex_{n,r}(0)=\delta_{n,N}\delta_{r,0}$ and 
  \begin{equation}
  	\partial_\tau \PPex_{n,r} = -\Gamma^{(1)}_{n,r}\PPex_{n,r}+\Gamma^{(2)}_{n+1,r}\PPex_{n+1,r}+\Gamma^{(3)}_{n+2,r-1}\PPex_{n+2,r-1}+\Gamma^{(4)}_{n,r+1}\PPex_{n,r+1}
  	\tag{\ref{eq:pnr}, restated}
  \end{equation}
  where the rates are given through~\cite{Shammah2018}
  \begin{equation}
  	\begin{aligned}
  	  \Gamma^{(1)}_{n,r} &= \frac{1}{N}n(\tildeN -n+1)+\frac{\gamma }{N}(n+r),\\
  	  \Gamma^{(2)}_{n,r} &= \frac{1}{N}n(\tildeN -n+1)+\frac{\gamma }{N}\frac{n(N+2)(\tildeN -n+1)}{(N-2r)(\tildeN +2)},\\
  	  \Gamma^{(3)}_{n,r} &= \frac{\gamma }{N}\frac{n(n-1)(\tildeN +r+1)}{(N-2r)(\tildeN +1)},\\
  	  \Gamma^{(4)}_{n,r} &= \frac{\gamma }{N}\frac{(\tildeN -n+1)(\tildeN -n+2)r}{(\tildeN +2)(\tildeN +1)}.
  	\end{aligned}
  	\tag{\ref{eq:rates_main}, restated}
  \end{equation}
  We again define the probability vector $\PPex=(\PPex_{0,0},\PPex_{1,0},\cdots)$, and the matrix equation
  \begin{equation}
  	\partial_\tau\PPex(\tau) = \Gamma \PPex(\tau).
  	\label{eq:matrix_P_loss}
  \end{equation}
\end{definition}
\begin{definition}[Approximate solution $\PPap$]\label{def:Pnrtilde}
  We define the approximate solution $\PPap$ as the solution to \cref{eq:pnr} with rates 
  \begin{subequations}
  	\begin{align}
  	  \tilde \Gamma^{(1)}_{n,r} &= \frac{1}{N}n(\tildeN -n+1)+\frac{\gamma }{N}n,\\
  	  \tilde \Gamma^{(2)}_{n,r} &= \frac{1}{N}n(\tildeN -n+1)+\frac{\gamma }{N^2}n(N-n),\\
  	  \tilde \Gamma^{(3)}_{n,r} &= \frac{\gamma }{N^2}n^2,\\
  	  \tilde \Gamma^{(4)}_{n,r} &= 0.
  	\end{align}
  	\label{eq:rates_approx}
  \end{subequations}
\end{definition}

\begin{definition}[Distribution of dark excitations $F$]\label{def:F}
  The distribution in $r$, corresponding to the total spin $j=N/2-r$, is given by the Poisson distribution
  \begin{equation}
  	F_r(\tau) = e^{-\gamma \tau}\frac{(\gamma \tau)^r}{r!}
  	\tag{\ref{eq:F_thm}, restated}
  \end{equation}
\end{definition}
We will frequently need to use that $\langle r\rangle/N$ vanishes, for which we use the following result.
\begin{lemma}[Vanishing probability mass 2]\label{lm:small_r}
  In the limit $N\to\infty$, the probability mass of the distribution $F(\tau)$ vanishes faster than any polynomial of $N$ in the interval from $r=N^\eps$ to $N$ for any $\eps>0$ and $\tau=\OO(N^\delta)$ if $\delta<\eps$.
  Specifically,
  \begin{equation}
  	\lim_{N\to\infty}\sum_{r=N^\eps}^{N}N^kF_r(\tau)
  	=0.
  \end{equation}
\end{lemma}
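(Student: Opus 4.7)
The plan is a routine Poisson-tail estimate. The mean of $F_r(\tau)$ is $\gamma\tau=\OO(N^\delta)$; since the summation begins at $r=N^\epsilon$ with $\epsilon>\delta$, we sit well above the mean, so super-polynomial suppression is expected.

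First, I would apply Stirling's bound $r!\geq (r/e)^r$ to control each term,
\begin{equation*}
F_r(\tau)\leq e^{-\gamma\tau}\left(\frac{e\gamma\tau}{r}\right)^r.
\end{equation*}
Next, the ratio $F_{r+1}(\tau)/F_r(\tau)=\gamma\tau/(r+1)$ is monotonically decreasing in $r$, and for $r\geq N^\epsilon$ and $N$ large enough it is bounded by $1/2$ (this is where the hypothesis $\delta<\epsilon$ enters). Hence the tail is a geometric series dominated by its first term:
\begin{equation*}
\sum_{r=\lceil N^\epsilon\rceil}^{N} F_r(\tau)\leq 2\,F_{\lceil N^\epsilon\rceil}(\tau)\leq 2\bigl(C N^{\delta-\epsilon}\bigr)^{N^\epsilon},
\end{equation*}
for a constant $C>0$ absorbing $e\gamma$ and the implicit constant in $\tau=\OO(N^\delta)$.

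Finally, multiplying by $N^k$ and taking logarithms,
\begin{equation*}
\log\!\Bigl[N^k\!\!\sum_{r=N^\epsilon}^N F_r(\tau)\Bigr]\leq k\log N+N^\epsilon\bigl(\log(2C)+(\delta-\epsilon)\log N\bigr),
\end{equation*}
and since $\delta-\epsilon<0$ the right-hand side diverges to $-\infty$ like $-(\epsilon-\delta)N^\epsilon\log N$, which dwarfs the leading $k\log N$. This yields the claim. I do not foresee a substantive obstacle: the only point warranting a slight check is that the $1/2$ ratio bound applies uniformly over the full summation range $r\in[N^\epsilon,N]$, which is immediate from the monotonicity of $F_{r+1}/F_r$ in $r$. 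Compared to \cref{lm:vanishing_prob_mass}, the argument is actually simpler, because here we are directly bounding a Poisson tail rather than a binomial-like tail depending on a small time parameter.
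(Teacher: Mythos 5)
Your proposal is correct and takes essentially the same approach as the paper: both rest on the Stirling estimate $F_r(\tau)\lesssim (e\gamma\tau/r)^r$ evaluated at $r=N^\eps$, giving a bound of the form $N^{(\delta-\eps)N^\eps}$ that overwhelms any $N^k$. The only cosmetic difference is that the paper bounds the tail sum crudely by the number of terms ($<N$) times the first term, whereas you tighten this to twice the first term via the decreasing geometric ratio $F_{r+1}/F_r\leq 1/2$.
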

\begin{proof}
  Using \cref{eq:F_thm}, we have
  \begin{equation*}
  	\lim_{N\to\infty}\sum_{r=N^\eps}^{N}N^kF_r(\tau)<N^{k+1}N^{(\delta-\eps)N^{\eps}}\to0.
  \end{equation*}
\end{proof}

At early times (before $\tau_0$) we linearize the rates around $s=\tildeN -n\approx0$, which yields
\begin{subequations}
  \begin{align}
  	\Gamma^{(1),\mathrm{lin}}_{n,r} &= \tildeN -n+1+\gamma ,\\
  	\Gamma^{(2),\mathrm{lin}}_{n,r} &= \tildeN -n+1,\\
  	\Gamma^{(3),\mathrm{lin}}_{n,r} &= \gamma ,\\
  	\Gamma^{(4),\mathrm{lin}}_{n,r} &= 0.
  \end{align}
  \label{eq:rates_linear}
\end{subequations}
\begin{definition}[Early solution with loss $\Q$]\label{def:Qnr}
  We define $\Q_{n,r}$ by
  \begin{equation}
  	\begin{aligned}
  	  \Q_{n,r}(\tau) = e^{-\tau}(1-e^{-\tau})^{N-2r-n}e^{-\gamma \tau}\frac{(\gamma \tau)^r}{r!}
  	  = Q_{n+2r}(\tau)F_r(\tau),
  	\end{aligned}
  	\label{eq:Qnr}
  \end{equation}
  with $Q_n$ and $F_r$ defined in \cref{def:Q,def:F}.
  $\Q_{n,r}$ obeys \cref{eq:pnr} with rates in \cref{eq:rates_linear} and can thus equivalently be written as
  \begin{equation}
  	\Q(\tau) = e^{\Gamma_0\tau}\PPex(0),
  	\label{eq:Q_matrix_loss}
  \end{equation}
  where $\Gamma_0$ is the same matrix as $\Gamma$ in \cref{def:Pnr}, but with linearized rates \cref{eq:rates_linear}.
\end{definition}

We solve for the dynamics after the initial phase by moving to a continuum limit in $x=n/N$, while retaining $r$ as a discrete variable. 
Specifically, we solve
\begin{equation}
  \partial_\tau p_{n,r}(\tau) - \partial_n\left[\frac{n(N-n)}{N}p_{n,r}(\tau)\right]=\gamma\frac{n^2}{N^2}\left[p_{n,r-1}(\tau)-p_{n,r}(\tau)\right].
  \label{eq:PDE_loss}
\end{equation}
\begin{lemma}[Continuum solution with loss]\label{lm:continuum_loss}
  \Cref{eq:PDE_loss} with initial data at time $\tau=\tau_0$ being $\Pcont_{n+2r}(\tau_0)F_r(\tau_0)$ is solved by
  \begin{equation}
  	\begin{aligned}
  	  \PPcont_{n,r}(\tau)&= \Pcont_{n+2r}(\tau)F_{r}(\tilde T_n(\tau)),
  	\end{aligned}
  	\label{eq:R_app}
  \end{equation}
  where
  \begin{equation}
  	\begin{aligned}
  	  \tilde T_n(\tau)
  	  &= \tau+\left\{\frac{n}{N}-\frac{n}{n+e^{\tau_0-\tau}(N-n)}\right.\\
  	  &+\left.\ln\left[ \frac{n}{N}+e^{\tau_0-\tau}\left(1-\frac nN\right) \right] \right\}.
  	\end{aligned}
  	\label{eq:Tn_app}
  \end{equation}
\end{lemma}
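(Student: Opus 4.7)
The strategy is to verify the ansatz by direct substitution into \cref{eq:PDE_loss}, reducing the PDE (up to $\OO(r/N)$ corrections discussed below) to a single transport equation for $\tilde T_n(\tau)$. Writing $P \ce \Pcont_{n+2r}(\tau)$ and $F \ce F_r(\tilde T_n(\tau))$ and applying the product rule, the two structural inputs needed are the pure-Dicke continuum equation \cref{eq:pn_continuum}---which, via the chain rule $\partial_n|_r = \partial_m$ at $m = n+2r$, gives $\partial_\tau P = v(m)\partial_n P + v'(m) P$ with $v(x) \ce x(N-x)/N$---and the Poisson recurrence $\partial_T F_r(T) = \gamma[F_{r-1}(T)-F_r(T)]$. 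Substituting $\gamma F_{r-1} = \partial_T F_r + \gamma F_r$ on the right-hand side of \cref{eq:PDE_loss} and matching the coefficient of $\partial_T F_r$ against the piece generated by the $\partial_\tau F$ and $\partial_n F$ terms on the left, the PDE collapses (modulo the corrections noted below) to the transport equation
\begin{equation*}
  \partial_\tau \tilde T_n(\tau) \;-\; \frac{n(N-n)}{N}\,\partial_n \tilde T_n(\tau) \;=\; \frac{n^2}{N^2}.
\end{equation*}

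The cleanest way to establish the transport equation is via characteristics. Solving $dn/ds = -v(n)$ yields $n(s) = n_0 N/[n_0 + (N-n_0)e^{s-\tau_0}]$, so the characteristic arriving at $(n,\tau)$ starts at $n_0 = nN/[n + (N-n)e^{-(\tau-\tau_0)}]$. Along a characteristic the transport operator $\partial_\tau - v(n)\partial_n$ becomes the total derivative $d/ds$, so the unique solution with boundary value $\tilde T_n(\tau_0) = \tau_0$ is $\tilde T_n(\tau) = \tau_0 + \int_{\tau_0}^\tau n(s)^2/N^2\,ds$. I would then change variables $s \to u = n(s)$ using $ds = -N\,du/[u(N-u)]$, reducing the integrand to $u/[N(N-u)]\,du$, which after the substitution $w = N-u$ integrates elementarily. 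Plugging in the endpoints and simplifying $(N-n)/(N-n_0)$ using the explicit expression for $n_0$ recovers precisely the closed-form formula for $\tilde T_n(\tau)$ stated in the lemma. The initial condition $\PPcont_{n,r}(\tau_0) = \Pcont_{n+2r}(\tau_0) F_r(\tau_0)$ then follows from $\tilde T_n(\tau_0) = \tau_0$: at $\tau=\tau_0$ the quantity $A \ce n + e^{\tau_0-\tau}(N-n)$ reduces to $N$, so the correction $n/N - n/A + \log(A/N)$ vanishes.

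The main subtlety I foresee is the $\OO(r/N)$ discrepancy in the substitution. Because $\Pcont_{n+2r}$ is advected by $v(n+2r)$ rather than $v(n)$, and because the right-hand side of \cref{eq:PDE_loss} involves $\Pcont_{n+2r-2}$ rather than $\Pcont_{n+2r}$, the expansion generates residual terms proportional to $v(n+2r)-v(n) = 2r(N-2n-2r)/N$ and to $\Pcont_{n+2r-2} - \Pcont_{n+2r}$ that do not strictly cancel; the ansatz therefore satisfies \cref{eq:PDE_loss} only up to $\OO(r/N)$. The downstream error analysis for \cref{th:loss} operates in a regime where $F_r$ is Poisson with mean $\gamma \tilde T_n = \OO(\log N)$, so that the relevant $r$ scales as $\log N$ and $r/N \to 0$; the residual terms are thus absorbed into the overall large-$N$ error bound without obstructing the use of this lemma.
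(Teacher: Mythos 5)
Your route -- derive a transport equation for $\tilde T_n(\tau)$, solve it along characteristics of $dn/ds = -n(N-n)/N$, and recover the closed form in \cref{eq:Tn_app} -- is correct and is a constructive version of what the paper condenses to ``proof is by substitution.'' I checked the pieces: the transport operator $\partial_\tau - \tfrac{n(N-n)}{N}\partial_n$ does act on \cref{eq:Tn_app} to give $n^2/N^2$ (with $A\ce n + e^{\tau_0-\tau}(N-n)$ one gets $\partial_\tau \tilde T_n = n^2/A^2$ and the $v(n)\partial_n$ piece supplies the remaining $n^2/N^2 - n^2/A^2$), your characteristic curve $n(s) = n_0 N/[n_0 + (N-n_0)e^{s-\tau_0}]$ is right, and the substitution $w = N-u$ indeed reduces the integral to $\log[(N-n)/(N-n_0)] - (n_0-n)/N$, which matches \cref{eq:Tn_app} after eliminating $n_0 = nN/A$. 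The initial-condition check at $\tau=\tau_0$ also goes through. So where the paper verifies, you derive; the content is the same.

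Your flagged subtlety is also real, and you are right to raise it: plugging $\Pcont_{n+2r}(\tau)F_r(\tilde T_n(\tau))$ directly into \cref{eq:PDE_loss} leaves a residual
\begin{equation*}
  -\frac{4r}{N}\Pcont_{m}F_r \;+\; \frac{2r(N-2n-2r)}{N}\,(\partial_m\Pcont_{m})F_r \;+\; \gamma\frac{n^2}{N^2}\bigl(\Pcont_{m}-\Pcont_{m-2}\bigr)F_{r-1},\qquad m\ce n+2r,
\end{equation*}
stemming exactly from the two mismatches you name ($\Pcont_{n+2r}$ is advected by $v(n+2r)$ while the PDE uses $v(n)$; the source term involves $\Pcont_{n+2r-2}$ rather than $\Pcont_{n+2r}$). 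So the lemma as stated is a mild overclaim: the formula solves \cref{eq:PDE_loss} exactly only at $r=0$, and otherwise up to terms that are $\OO(r/N)$ relative to the leading ones. Your resolution -- that in the regime used downstream the Poisson weight confines $r$ to $\OO(\log N)$ (cf.\ \cref{lm:small_r}) so these residuals contribute at the same order as the ones already bounded in \cref{eq:Eps_loss_bound} -- is the right way to read the situation, and it is more honest than the paper's one-line proof, which does not surface the discrepancy at all. If you want to make the argument airtight, you should fold these extra $\OO(r/N)$ terms explicitly into the definition of $\Eps^{\mathrm{in}}_{n,r}$ in the proof of \cref{lm:continuum_works_loss} rather than leaving the claim of exactness in this lemma; the resulting bound is unchanged because those terms are of the same form $A_k$ already controlled there.
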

\begin{proof} Proof is by substitution. \end{proof}
In \cref{th:loss}, we take $\tau_0\to0$ in \cref{eq:Tn_app} to obtain \cref{eq:Tn_thm}. The main reason to do so is for simplicity. As we prove below, it does not affect the solution.
Intuitively, the continuum solution \cref{eq:R_app} can be understood by going back to \cref{eq:PDE_loss} and realizing that the left-hand side is the same as \cref{eq:pn_continuum}. We can think of it as the homogeneous part of the differential equation, which is solved by $\Pcont$. The right-hand side only acts on the distribution in $r$ and therefore does not change the distribution in $n$.
Since the right-hand side increases $r$ at a rate depending on $\gamma n^2/N^2$, the distribution in $r$ is a Poisson distribution (\cref{def:F}) but with a time that depends on the dynamics of $n$, which can in turn be related to the characteristics of the left-hand side of \cref{eq:PDE_loss}.

At late times, we linearize around $n=0$ to obtain the rates
\begin{subequations}
  \begin{align}
  	\Gamma^{(1),\mathrm{late}}_{n,r}&=n+\gamma r/N,\\
  	\Gamma^{(2),\mathrm{late}}_{n,r}&=n,\\
  	\Gamma^{(3),\mathrm{late}}_{n,r}&=0,\\
  	\Gamma^{(4),\mathrm{late}}_{n,r}&=\gamma r/N.
  \end{align}
  \label{eq:loss_linearized_rates}
\end{subequations}
To extend the solution to all times, we need to keep small contributions of order $r/N$ in $\Gamma^{(4)}$, as they are responsible for the decay of dark excitations at late times.
The rates predict that the remaining collective and dark excitations decay independently, but their rates have different orders in $N$. 
In the large-$N$ limit, we thus expect first a rapid decay to $n=0$, and subsequently a slow decay of the dark excitations, which can only decay incoherently, as they are dark with respect to the collective decay.

\begin{definition}[Late solution with loss]\label{lm:late}
  We define the late distribution
  \begin{equation}
  	\begin{aligned}
  	  \R^>_{n,r}(\tau) &= \sum_{r'=r}^{N/2}
  	  \mat{r'\\r}e^{-\gamma r'(\tau-\tau_1)/N}(e^{\gamma (\tau-\tau_1)/N}-1)^{r'-r}\\
  	  &\times
  	  \sum_{n'=n}^{N-2r'}\mat{n'\\n}e^{-n'(\tau-\tau_1)}(e^{\tau-\tau_1}-1)^{n'-n}\R_{n'r'}(\tau_1).
  	\end{aligned}
  	\label{eq:R>tau2}
  \end{equation}
  It obeys $\R^>(\tau_1)=\R^<(\tau_1)$ and obeys the equations of motion \cref{eq:pnr} with linearized rates \cref{eq:loss_linearized_rates}.
\end{definition}

\subsection{Main proof}\label{app:proof2}
The proof of \cref{th:loss} has the same structure as the one of \cref{th:pure}. One small difference is that for times shorter than $\tau_1$, we bound the difference to the approximate solution $\PPap$ instead of the exact solution $\PPex$, as it is simpler. \Cref{lm:approximation_converges} shows that this is justified, because the difference between the approximate and exact solution also vanishes for times that grow only logarithmically.

\begin{T2}[Superradiance with incoherent loss (restated)]
  The difference between the exact solution $\rho(\tau)$ to \cref{eq:QME}, subject to the initial condition $\rho(0)=\ket{N}\bra N$,
  and $\sigma^{(N)}(\tau)=\R_{n,r}(\tau)\rhonr_{n,r}$ obeys
  for any sequence of times $\{\tau_n > 0: n \in \mathbb{N}\}$, 
  \begin{equation}
  	\lim_{N\to\infty}||\sigma^{(N)}(\tau_N)-\rho(\tau_N)||_1\to0,\qquad\forall \tau\geq0.
  	\tag{\ref{eq:convergence_loss}, restated}
  \end{equation}
\end{T2}
\begin{proof}
  Applying the triangle inequality twice, \cref{lm:Q_loss,lm:QtoR_loss,lm:approximation_converges} together establish \cref{eq:convergence_loss} for times $\tau_N<\ln N$. Similarly, \cref{lm:Rtilde_loss,lm:continuum_works_loss,lm:approximation_converges} show that \cref{eq:convergence_loss} holds between $\tau_0$ and $\tau_1$, and \cref{lm:late_times_loss} proves that \cref{eq:convergence_loss} is true for $\tau_N\geq\tau_1$.
\end{proof}

\begin{lemma}[$\Q$ is correct]\label{lm:Q_loss}
  For any sequence of times $\{\tau_N < \ln N: N \in \mathbb{N}\}$, 
  $\Q(\tau_N)$ converges to the approximate solution $\PPap(\tau)$.
  \begin{equation}
  	\Delta_1(\tau) \ce | |\Q(\tau)-\PPap(\tau)| |_1\Nto0.
  	\label{eq:delta1_loss}
  \end{equation}
\end{lemma}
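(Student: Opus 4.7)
The plan is to parallel the Duhamel argument of \cref{lm:Q}. Let $\Gamma_0$ denote the generator with linearised rates \cref{eq:rates_linear}, so that $\Q(\tau)=e^{\Gamma_0\tau}\PPex(0)$, and $\tilde\Gamma$ the generator with approximate rates \cref{eq:rates_approx}, so that $\PPap(\tau)=e^{\tilde\Gamma\tau}\PPex(0)$. Setting $\Gamma_1=\tilde\Gamma-\Gamma_0$, variation of constants yields
\begin{equation}
  \PPap(\tau)-\Q(\tau)=\int_0^\tau d\tau'\,e^{\tilde\Gamma(\tau-\tau')}\,\Gamma_1\,\Q(\tau').
\end{equation}
A direct check of column sums shows $\tilde\Gamma^{(1)}_{n,r}=\tilde\Gamma^{(2)}_{n,r}+\tilde\Gamma^{(3)}_{n,r}+\tilde\Gamma^{(4)}_{n,r}$, so $e^{\tilde\Gamma\tau}$ is stochastic and non-expansive in 1-norm, giving
\begin{equation}
  \Delta_1(\tau)\le \int_0^\tau d\tau'\,||\Gamma_1\Q(\tau')||_1.
\end{equation}

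The next step is to expose a telescoping structure in $(\Gamma_1\Q)_{n,r}$ so that, exactly as in \cref{lm:Q}, the 1-norm collapses to boundary terms rather than an averaged $L^1$ bound. Since both the linearised and the approximate rates obey the same probability-conservation identity, the rate differences satisfy $\Delta\Gamma^{(1)}=\Delta\Gamma^{(2)}+\Delta\Gamma^{(3)}$ with $\Delta\Gamma^{(4)}=0$, so
\begin{equation}
  (\Gamma_1\Q)_{n,r}=\bigl[\Delta\Gamma^{(2)}_{n+1,r}\Q_{n+1,r}-\Delta\Gamma^{(2)}_{n,r}\Q_{n,r}\bigr]+\bigl[\Delta\Gamma^{(3)}_{n+2,r-1}\Q_{n+2,r-1}-\Delta\Gamma^{(3)}_{n,r}\Q_{n,r}\bigr].
\end{equation}
The first bracket is a nearest-neighbour difference in $n$ at fixed $r$, and the second a nearest-neighbour difference along the anti-diagonal $m=n+2r=\text{const}$. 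For each bracket I would sum the absolute values using the total-variation bound $\sum_k|f_{k+1}-f_k|\le 2\max_k|f_k|$ (the analogue of the sign-splitting used in \cref{lm:Q}), reducing $||\Gamma_1\Q||_1$ to $2\sum_r \max_n|\Delta\Gamma^{(2)}_{n,r}\Q_{n,r}|+2\sum_m \max_r|\Delta\Gamma^{(3)}_{m-2r,r}\Q_{m-2r,r}|$.

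Direct calculation in terms of $s=\tildeN-n$ gives
\begin{equation}
  \Delta\Gamma^{(2)}=\frac{(s+2r)[\gamma n/N-(s+1)]}{N},\qquad \Delta\Gamma^{(3)}=-\frac{\gamma(s+2r)(N+n)}{N^2},
\end{equation}
both linear in $s+2r$ and therefore vanishing at the edge of the support of $\Q$. Using the factorised form $\Q_{n,r}=Q_{n+2r}(\tau)F_r(\tau)$ — geometric in $s$ with scale $e^\tau$, Poisson in $r$ with mean $\gamma\tau$ — the maxima $\max_n|\Delta\Gamma^{(2)}_{n,r}Q_{n+2r}(\tau)|$ are of order $(e^{\tau}+r)/N$, attained at $s\sim e^\tau$, and \cref{lm:small_r} controls the $r$-tail so that $\sum_r F_r(\tau)(e^\tau+r)/N = O((e^\tau+\gamma\tau)/N)$. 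The loss channel admits a parallel estimate after re-indexing along the anti-diagonal. Putting these together, $||\Gamma_1\Q(\tau')||_1=O((e^{\tau'}+\gamma\tau')/N)$, and integrating for $\tau<\delta_0\log N$ with $\delta_0<1$ gives $\Delta_1(\tau)=O(N^{\delta_0-1}+(\log N)^2/N)\to 0$, which covers the stated range.

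The principal obstacle is executing the telescoping in the loss channel cleanly: the nearest-neighbour difference there runs along the anti-diagonal, so the coefficient $\Delta\Gamma^{(3)}_{m-2r,r}F_r(\tau)$ couples the $s$-dependence of the rate difference to the Poisson factor $F_r$, and locating its maximum in $r$ requires care with the tails on both sides. A naive triangle-inequality bound produces only $O((e^{\tau'}+\gamma\tau')^2/N)$ from the second moments of $Q$ and $F$, which integrates to zero merely for $\tau<(\log N)/2$; the sharp cancellation provided by the total-variation step is what is needed to reach the full range $\tau<\log N$. A secondary nuisance is that the support of $\Q$ leaks into the regime where the linearisation of the rates is poor, so the sums must be truncated using \cref{lm:vanishing_prob_mass,lm:small_r} before the telescoping argument is applied.
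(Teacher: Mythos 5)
Your overall structure (Duhamel plus non-expansiveness of $e^{\tilde\Gamma\tau}$, then bound the integrand) matches the paper, and your telescoping identity
$(\Gamma_1\Q)_{n,r}=[\Delta\Gamma^{(2)}_{n+1,r}\Q_{n+1,r}-\Delta\Gamma^{(2)}_{n,r}\Q_{n,r}]+[\Delta\Gamma^{(3)}_{n+2,r-1}\Q_{n+2,r-1}-\Delta\Gamma^{(3)}_{n,r}\Q_{n,r}]$,
which follows from $\Delta\Gamma^{(4)}=0$ and $\Delta\Gamma^{(1)}=\Delta\Gamma^{(2)}+\Delta\Gamma^{(3)}$, is correct. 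The rate differences you compute are also correct. However, there are two substantive problems.

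First, the inequality $\sum_k|f_{k+1}-f_k|\le 2\max_k|f_k|$ is not true as stated; it holds only for unimodal sequences (or, with a larger constant, for sequences with a bounded number of monotone pieces). You signal awareness of this by calling it ``the analogue of the sign-splitting used in \cref{lm:Q}'', but in \cref{lm:Q} the sign structure is established explicitly — the bracketed quantity changes sign exactly once in $n$, so the sum over each sign-definite block telescopes to boundary terms. You do not verify the corresponding sign structure for $\Delta\Gamma^{(2)}_{n,r}\Q_{n,r}$ (in $n$ at fixed $r$) nor for $\Delta\Gamma^{(3)}$ along $m=n+2r$. Without that verification, the step where $\|\Gamma_1\Q\|_1$ collapses to $\sum_r \max_n|\cdot|$ is unjustified.

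Second, and more importantly, you misdiagnose where the difficulty lies, which leads you to over-engineer the $\Delta\Gamma^{(3)}$ channel and at the same time bury the genuinely delicate piece. Note that $\Delta\Gamma^{(3)}=\gamma(s+2r)(N+n)/N^2=O\bigl((s+2r)/N\bigr)$ is \emph{linear} in $s+2r$, not quadratic, so a plain triangle-inequality (first-moment) bound already gives $\sum_{n,r}|\Delta\Gamma^{(3)}_{n,r}|\,\Q_{n,r}=O\bigl((e^\tau+\gamma\tau)/N\bigr)$, which integrates to something vanishing for all $\tau<\log N$. No anti-diagonal telescope is needed, and your ``principal obstacle'' is not an obstacle. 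The genuine quadratic dependence sits entirely in the $\gamma$-independent piece of $\Delta\Gamma^{(2)}$, namely $(s+2r)(s+1)/N$, which is exactly the quantity already handled by the sign-splitting in \cref{lm:Q}. The paper exploits exactly this: it separates the bound into a $\gamma$-independent part, reduced to \cref{lm:Q} after using \cref{lm:small_r} to neglect the $r$-dependence, and a $\gamma$-dependent part, in which every rate difference is $O\bigl((N-n)/N\bigr)$ so that, after restricting the range of $n$ via \cref{lm:vanishing_prob_mass} and of $r$ via \cref{lm:small_r}, a crude bound of the form $\gamma\log N\cdot N^{\mu-1}\cdot(\text{normalization})$ suffices. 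That decomposition is considerably simpler than yours and avoids any new sign analysis. Your approach could likely be completed, but you would need to (i) explicitly split $\Delta\Gamma^{(2)}$ into its quadratic ($\gamma$-independent) and linear ($\gamma$-dependent) parts, (ii) establish the unimodality/sign structure for the quadratic piece (re-using the argument of \cref{lm:Q}), and (iii) observe that all remaining pieces are linear and bounded by first moments, which obviates both the total-variation step and the anti-diagonal telescope.
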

\begin{proof}
  As in \cref{lm:Q}, we define the difference between the matrices that generate the evolution $\Gamma_1=\Gamma_0-\tilde\Gamma$, with $\Gamma_0$ defined in \cref{def:Qnr} and $\tilde\Gamma$ defined in \cref{def:Pnrtilde}.
  This allows us to bound $\Delta_1$ as \cref{eq:delta1_expr}
  \begin{equation}
  	\begin{aligned}
  	  \Delta_1&\leq\ln N\max_{\tau\in[0,\tau_0]}| |\Gamma_1\Q(\tau)| |\\
  	&=\ln N\sum_{n,r}\left|-\gamma_{1,nr}^{(1)}\Q_{n,r}(\tau)+\gamma_{1,n+1,r}^{(2)}\Q_{n+1,r}(\tau)\right.\\
  	&\left.\qquad+\gamma_{1,n+2,r-1}^{(3)}\Q_{n+2,r-1}(\tau)\right|,
  	\end{aligned}
  	\label{eq:delta1_bound_loss}
  \end{equation}
  where
  \begin{subequations}
  	\begin{align}
  	  \gamma_{1,nr}^{(1)}&=(\tildeN -n+1)(1-n/N)+\gamma (1-n/N),\\
  	  \gamma_{1,nr}^{(2)}&=(\tildeN -n+1)(1-n/N)-\gamma n(1-n/N)/N,\\
  	  \gamma_{1,nr}^{(3)}&=\gamma (1-n^2/N^2).
  	\end{align}
  	\label{eq:gamma1_rates}
  \end{subequations}
  We split the bound into a part independent of $\gamma $ and one part proportional to $\gamma $, which we will treat separately.
  The independent part is
  \begin{equation}
  	\begin{aligned}
  	  \Delta^{\mathrm{(wg)}}_1 &= \frac{\ln N}{N}\sum_{n,r}\left|(\tildeN -n+1)(N-n)Q_n(\tau)\right.\\
  	  &\left.\qquad-(\tildeN -n)(N-n-1)Q_{n+1}(\tau)\right|F_r(\tau).
  	\end{aligned}
  	\label{eq:Deltawg}
  \end{equation}
  Due to \cref{lm:small_r}, we can neglect the contribution from $r$, such that \cref{eq:Deltawg} reduces to \cref{eq:argument} and thus vanishes.
  The part proportional to $\gamma $ reads
  \begin{equation}
  	\Delta_1^{\mathrm{(free)}}=\gamma \ln N\sum_{n,r}\frac{N-n}{N}\left|-\Q_{n,r}-\frac nN\Q_{n+1,r}+\left(1-\frac nN\right)\Q_{n+2,r-1}\right|.
  	\label{eq:delta1_loss_free}
  \end{equation}
  Note that the rates inside the absolute value sign are all equal or smaller than 1. Using \cref{lm:vanishing_prob_mass}, we can restrict the sum over to values between $\tildeN $ and $\tildeN -N^\mu$, which allows us to bound $N-n$ by $N^\mu$ (recall that $r$ is at most $\ln N$ by \cref{lm:small_r}).
  Since $\Q$ is normalized, we find $\Delta_1^{\mathrm{(free)}}<\gamma (\ln N)^2N^{\mu-1}$, which vanishes as $N\to\infty$.

\end{proof}

\begin{lemma}[Equivalence of $\R$ and $\PPcont$]\label{lm:Rtilde_loss}
  For $\tau_0<\tau<\tau_1$, $\R^<$ [\cref{eq:Rnr<}] converges to $\PPcont$ [\cref{lm:continuum_loss}]
  \begin{equation}
  	\Delta_3(\tau)\ce| |\R(\tau)-\PPcont(\tau)| |\Nto0.
  	\label{eq:RtoRtilde_loss}
  \end{equation}
\end{lemma}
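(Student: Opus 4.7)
The plan is a triangle-inequality decomposition of the integrand into an ``$R$-vs-$\Pcont$'' piece and a ``$T_n$-vs-$\tilde T_n$'' piece, followed by separate bounds on each. Explicitly, write
\begin{equation*}
  |\R^<_{n,r}-\PPcont_{n,r}|\le |R_{n+2r}-\Pcont_{n+2r}|\,F_r(T_n)+\Pcont_{n+2r}\,|F_r(T_n)-F_r(\tilde T_n)|,
\end{equation*}
and denote the sums of the two terms over $(n,r)$ by $A(\tau)$ and $B(\tau)$. I then show $A(\tau)\to 0$ and $B(\tau)\to 0$ separately.

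For $A(\tau)$, I would bound $F_r(T_n)\le 1$ and apply \cref{lm:small_r} to restrict the summation to $r\le N^\epsilon$ (for arbitrarily small $\epsilon>0$, valid because $\tau=O(\log N)$). Changing variables to $m=n+2r$ and noting that there are at most $N^\epsilon+1$ valid pairs $(n,r)$ with $n+2r=m$ yields $A(\tau)\le(N^\epsilon+1)\Delta_3(\tau)$ in the notation of \cref{lm:Rtilde}. Inspection of that proof shows $\Delta_3=O(N^{-c})$ for a fixed $c>0$, so picking $\epsilon<c$ gives $A\to 0$.

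For $B(\tau)$, I would invoke the standard total-variation bound for two Poisson distributions, $\sum_r|F_r(T)-F_r(\tilde T)|\le \gamma|T-\tilde T|$, combined with the smoothness estimate $\Pcont_{n+2r}=\Pcont_n(1+o(1))$ uniform in $r\le N^\epsilon$ (because $\Pcont$ varies slowly compared with its bulk scale). This reduces $B\lesssim \gamma\sum_n \Pcont_n\,|T_n-\tilde T_n|$. The key step is then to bound $|T_n-\tilde T_n|$ on the support of $\Pcont$. Here the crucial observation is that $T_n$ and $\tilde T_n$ both solve the same first-order PDE $\partial_\tau T-(n(N-n)/N)\partial_n T=n^2/N^2$ (the one underlying \cref{lm:continuum_loss}) but with boundary data imposed at different times. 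Their difference therefore solves the homogeneous equation and is constant along the characteristics $dn/d\tau=-n(N-n)/N$. Tracing the characteristic through $(n,\tau)$ back to $\tau_0$ gives $(T_n-\tilde T_n)(\tau)=T_{n_0}(\tau_0)-\tau_0$ with $n_0=nN/[n+(N-n)e^{\tau_0-\tau}]$, and a Taylor expansion of $T_{n_0}(\tau_0)$ around $n_0=N$ then yields $|T_n-\tilde T_n|=O((N-n_0)/N)=O(N^{\delta_0-1})$ uniformly on the bulk of $\Pcont$.

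The main obstacle will be controlling the tails. Outside the bulk of $\Pcont$, $|T_n-\tilde T_n|$ can be $O(1)$ or larger (especially near $\tau=\tau_0$ and small $n$), so one must verify that $\Pcont_n$ decays super-polynomially there, in parallel with how the tails are handled in the proof of \cref{lm:Rtilde} via the $G$-bounds of \cref{lm:G}. Combining the bulk smallness of $|T_n-\tilde T_n|$ with the tail decay of $\Pcont_n$ then yields $B\to 0$ and, together with $A\to 0$, completes the argument.
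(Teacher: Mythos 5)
The decomposition is the same as the paper's — split the integrand into a $(R - \Pcont)$ piece and a $\Pcont\,(F_r(T_n) - F_r(\tilde T_n))$ piece — but your handling of the second piece is genuinely different and, in my view, cleaner. For the first piece you explicitly track the $N^\epsilon$ factor coming from the cut $r \le N^\epsilon$, and correctly note that this forces you to inspect the \emph{rate} in \cref{lm:Rtilde} rather than just the convergence statement (it is indeed polynomial with the chosen parameters), whereas the paper invokes \cref{lm:Rtilde} and \cref{lm:small_r} without spelling this out. For the second piece the paper restricts $n$ to $n > N - N^\mu$ via \cref{lm:vanishing_prob_mass} and then Taylor-expands $T_n(\tau) - \tau$ directly; but that restriction on $n$ is only justified at $\tau \lesssim \tau_0$, whereas the lemma is stated over $\tau_0 < \tau < \tau_1$. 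Your characteristics argument — observing that both $T_n$ and $\tilde T_n$ solve $\partial_\tau T - [n(N-n)/N]\,\partial_n T = n^2/N^2$, so $T_n - \tilde T_n$ is transported unchanged and equals $T_{n_0}(\tau_0) - \tau_0$ with $n_0 = nN/[n + (N-n)e^{\tau_0-\tau}]$ — applies on the full interval and correctly reduces the bound to $(N-n_0)/N = O(N^{\delta_0-1})$ on the bulk. What you buy is a bound that does not degrade as $\tau$ moves away from $\tau_0$; what you pay is that the tail control (where $n$ is small so $n_0$ is far from $N$) now genuinely needs the $G$-type estimates on $\Pcont$, which you flag but do not carry out. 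Two small nits: the Poisson total-variation bound is $\sum_r|F_r(T)-F_r(\tilde T)| \le 2\gamma|T-\tilde T|$ (factor of $2$), and $\Pcont_{n+2r} = \Pcont_n(1+o(1))$ uniformly in $r \le N^\epsilon$ additionally requires $e^{-\tau}N^2 r/n^2 \to 0$, which does hold on the bulk but should be stated.
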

\begin{proof}
  \begin{equation}
  	\begin{aligned}
  	  \Delta_3(\tau)&=\sum_{n,r}|(R_{n+2r}(\tau)-\Pcont_{n+2r}(\tau))F_r(T_n(\tau))\\
  	&-\Pcont_{n+2r}(\tau)[F_r(T_n(\tau))-F_r(\tilde T_n(\tau))]|.
  	\end{aligned}
  	\label{eq:rtilde_bound}
  \end{equation}
  The difference in the first line vanishes by \cref{lm:Rtilde,lm:small_r}.
  The second line is only nonzero for $\tau<\tau_0=\delta_0\ln N$, where $\delta_0<1$ as per \cref{eq:parameters}.
  For those times, we can restrict the sum over $n$ to values between $N-N^\mu$ and $N$ (\cref{lm:vanishing_prob_mass}).
  Defining $s\ce N-n$ and using $s<N^\mu$, we have
  \begin{equation}
  	\begin{aligned}
  	  T_n(\tau) &= \tau+1-\frac{s}{N}-\frac{1-s/N}{1-s(1-e^{-\tau})/N}+\ln\left[ 1-\frac{s}{N}\left( 1-e^{-\tau} \right) \right]\\
  	  &= \tau +\delta\tau_n,\qquad |\delta\tau_n|<N^{\mu-1}.
  	\end{aligned}
  	\label{eq:Tn_bound}
  \end{equation}
  Thus,
  \begin{equation}
  	\begin{aligned}
  	  &\sum_{n,r}\Pcont_{n+2r}(\tau)[F_r(T_n(\tau))-F_r(\tilde T_n(\tau))]|\\
  	  &<\sum_r\sum_{s=0}^{N^\mu}\Pcont_{N-s}(\tau)\left|F_r(\tau)-F_r(\tau+N^{\mu-1})\right|\\
  	  &\leq\sum_r\sum_{s=0}^{N^\mu}\PPcont_{N-s,r}(\tau)\frac{rN^{\mu-1}}{\tau}\to0.
  	\end{aligned}
  	\label{eq:Tn_bound2}
  \end{equation}
  This bound does not work for a sequence of times $\{\tau_n > 0: n \in \mathbb{N}\}$, in which the $\tau_N$ decay as $N^{\mu-1}$ or faster.
  However, in this case, we have $F_{r\geq1}(\tau_N)<N^{r(\mu-1)}$, which means we can replace $F_r$ by $\delta_{r,0}$, in which case we can use \cref{lm:Rtilde}.
\end{proof}

\begin{lemma}[$\R$ converges to $\Q$ at short times]\label{lm:QtoR_loss}
  For any $\tau<\ln N$, $\R(\tau)$ [\cref{eq:Rnr<}] converges to $\Q(\tau)$ (\cref{def:Qnr}).
  \begin{equation}
  	\Delta_2(\tau) \ce | |\Q(\tau)-\R(\tau)| |_1\Nto0.
  	\label{eq:delta2_loss}
  \end{equation}
\end{lemma}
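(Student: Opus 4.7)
The plan is a triangle-inequality split
\begin{equation*}
\Q_{n,r}(\tau) - \R_{n,r}^<(\tau) = [Q_{n+2r}(\tau) - R_{n+2r}(\tau)]\,F_r(\tau) + R_{n+2r}(\tau)\,[F_r(\tau) - F_r(T_n(\tau))],
\end{equation*}
giving $\Delta_2(\tau) \le A(\tau) + B(\tau)$, with $A$ the sum of absolute values of the first summand and $B$ the second. For $A$, I would interchange the $(n,r)$-sums, substitute $k = n+2r$, and extend the inner sum (originally over $k \in [2r,N]$) to all $k \in [0,N]$, obtaining $A(\tau) \le \sum_r F_r(\tau)\cdot \|Q(\tau)-R(\tau)\|_1 = \|Q(\tau)-R(\tau)\|_1$, which vanishes by \cref{lm:QtoR}.

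For $B$, the key estimate is that $T_n(\tau)$ is close to $\tau$ on the bulk support of $R_{n+2r}$. Setting $s = N-n$ and $u = (s/N)(1-e^{-\tau}) \in [0,1)$, an elementary rearrangement of \cref{eq:Tn_thm} yields $T_n(\tau) - \tau = -(1 - s/N)\,u/(1-u) + \log(1-u)$, from which $|T_n(\tau) - \tau| \le 4s/N$ whenever $u \le 1/2$. I would combine this with the pointwise Lipschitz bound $|F_r(\tau) - F_r(T_n(\tau))| \le \gamma\,|\tau - T_n(\tau)|\,[F_r(t^*) + F_{r-1}(t^*)]$ for some intermediate $t^*$ (obtained from $F_r'(t) = \gamma[F_{r-1}(t)-F_r(t)]$), and then split $B$ according to whether $n+2r \ge N - CN^\mu$ (bulk) or not (tail). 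In the bulk, $\max_n|T_n(\tau)-\tau| = O(N^{\mu-1})$ factors out and, after re-indexing $k = n+2r$, the remaining weighted sum is controlled by $\sum_{k,r} R_k(\tau)\,[F_r(t^*)+F_{r-1}(t^*)] = 2 + o(1)$; the error comes from approximating $F_r(t^*)$ by $F_r(\tau)$ via the total-variation inequality $\sum_r |F_r(t^*) - F_r(\tau)| \le 2\gamma|t^* - \tau|$ for Poissons. This yields $B_{\mathrm{bulk}} = O(N^{\mu-1}) \to 0$ for any $\mu < 1$. The tail is controlled by the crude pointwise bound $|F_r(\tau) - F_r(T_n(\tau))| \le 2$ together with \cref{lm:vanishing_prob_mass}, which delivers superpolynomial decay of $\sum_{k<N-N^\mu} R_k(\tau)$.

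The main obstacle is choosing $\mu$ consistently across the full range $\tau < \log N$: one needs $\mu > \tau/\log N$ for \cref{lm:vanishing_prob_mass} to apply to the $R$-tail and $\mu < 1$ for the bulk estimate to vanish. Writing $\tau = \bar\tau\log N$ with $\bar\tau < 1$, both constraints are met by the natural choice $\mu = (1 + \bar\tau)/2$, mirroring the parameter selection in the pure-case \cref{lm:QtoR}. This covers arbitrary sequences $\tau_N < (1-\epsilon)\log N$; extending the argument to $\tau_N$ arbitrarily close to $\log N$ requires a finer analysis of the competition between $\mu_N \to 1$ and the decay rate in \cref{lm:vanishing_prob_mass}, but the overall structure of the proof is unchanged.
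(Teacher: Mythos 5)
Your proof takes the same route as the paper's: a triangle split into a $\|Q-R\|_1$ piece handled by \cref{lm:QtoR}, plus a Poisson--Lipschitz piece controlled by restricting to the bulk via \cref{lm:vanishing_prob_mass,lm:small_r} and bounding $|T_n(\tau)-\tau|=O(N^{\mu-1})$ there. The only difference is technical: you Lipschitz-bound $F_r$ via $F_r'=\gamma(F_{r-1}-F_r)$ and a total-variation estimate, whereas the paper writes $|F_r(\tau)-F_r(T_n)|\le F_r(\tau)\,rN^{\mu-1}/\tau$ and uses $\sum_r rF_r(\tau)=\gamma\tau$; both deliver the same $O(N^{\mu-1})$ conclusion.
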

\begin{proof}
We use \cref{lm:QtoR} to replace $Q_n$ by $R_n$, and \cref{lm:vanishing_prob_mass,lm:small_r} to restrict the range of the sum. Using the same reasoning as in \cref{eq:Tn_bound2}
\begin{equation}
  \begin{aligned}
  	\Delta_2(\tau)&\to\sum_{s,r}R_{N-s}(\tau)\left|F_r(\tau)-F_r(T_{N-s})\right|\\
  	&\leq\sum_r\sum_{s=0}^{N^\mu}R_{N-s}(\tau)F_r(\tau)\frac{rN^{\mu-1}}{\tau}\to0.
  \end{aligned}
  \label{eq:delta2_vanishes_loss}
\end{equation}
\end{proof}

\begin{lemma}[$\PPcont$ is asymptotically correct.]\label{lm:continuum_works_loss}
  For times $\tau_0<\tau<\tau_1$, $\PPcont$ [\cref{eq:Rnr<}] converges to $\PPap$ (\cref{def:Pnrtilde})
  \begin{equation}
  	\Delta_4(\tau) = | |\PPap(\tau)-\PPcont(\tau)| |\to0.
  	\label{eq:delta3_loss}
  \end{equation}
\end{lemma}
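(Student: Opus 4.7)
The plan is to mirror the structure of \cref{lm:continuum_works} almost verbatim: derive a source-term evolution equation for the error $\Eps_{n,r}(\tau)\ce\PPap_{n,r}(\tau)-\PPcont_{n,r}(\tau)$, apply Duhamel's formula, and use that the discrete generator $\tilde\Gamma$ of \cref{def:Pnrtilde} is stochastic, hence $1$-norm non-expansive, to reduce everything to bounding the $1$-norm of the source uniformly over $\tau\in[\tau_0,\tau_1]$.

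First I would recast the PDE \cref{eq:PDE_loss} satisfied by $\PPcont$ as the discrete rate equation \cref{eq:pnr} with rates $\tilde\Gamma$ plus explicit corrections. Two Taylor expansions are needed: (i) replacing $\partial_n[n(N-n)\PPcont_{n,r}/N]$ by the first-order finite difference $(n+1)(N-n-1)\PPcont_{n+1,r}/N - n(N-n)\PPcont_{n,r}/N$, with residue $U^{(a)}_{n,r}$ bounded exactly as in \cref{eq:taylor_error,eq:H2} applied to $\PPcont_{m,r}$; and (ii) shifting $\gamma(n/N)^2\PPcont_{n,r-1}$ by $n\to n+2$ to match $\tilde\Gamma^{(3)}_{n+2,r-1}\PPcont_{n+2,r-1}$, producing a residue $U^{(b)}_{n,r}$ controlled by $(1/N)\partial_n[n^2\PPcont_{n,r-1}/N^2]$ and a second-order remainder. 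In addition, the PDE coefficient $n(N-n)/N$ differs from the rate coefficient $n(\tildeN-n+1)/N$ of $\tilde\Gamma^{(1,2)}$ by terms of order $(1+2r)/N$, and the subleading piece of $\tilde\Gamma^{(2)}$ contributes a further $\gamma/N$-suppressed source. Collecting all of these into $\Eps^{\mathrm{in}}_{n,r}(\tau)$, Duhamel's formula gives
\begin{equation}
\|\Eps(\tau)\|_1 \leq \|\Eps(\tau_0)\|_1 + (\tau_1-\tau_0)\max_{\tau_0\leq\tau'\leq\tau_1}\|\Eps^{\mathrm{in}}(\tau')\|_1,
\end{equation}
with the first term vanishing by combining \cref{lm:Q_loss,lm:QtoR_loss,lm:Rtilde_loss} at $\tau=\tau_0$.

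To bound $\|\Eps^{\mathrm{in}}\|_1$, I would factor $\PPcont_{n,r}(\tau)=\Pcont_{n+2r}(\tau)F_r(\tilde T_n(\tau))$ and perform the sum over $r$ first. Since $\tilde T_n(\tau)\leq\tau_1=\OO(\log N)$, \cref{lm:small_r} restricts $r$ to at most $N^\eps$ at arbitrarily small polynomial cost, so that $\sum_r F_r(\tilde T_n)\,r^k = \OO((\log N)^k)$. The remaining $n$-sums are then exactly of the type handled in \cref{eq:bound_U,eq:second_part_of_Eps_in}, i.e.\ a finite sum of terms $G_{a,b,c}$ (\cref{def:G}) with the very same exponents that appear there (up to logarithmic factors from the $r$-summation), and \cref{lm:G} shows that each one vanishes under the parameter choice \cref{eq:parameters}.

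The main obstacle will be the bookkeeping of contributions that arise from differentiating $F_r(\tilde T_n(\tau))$ in $n$: these introduce factors $\partial_n \tilde T_n = \OO((1-n/N)e^{\tau_0-\tau})$ from \cref{eq:Tn_app}, which are small in the window $\tau\in[\tau_0,\tau_1]$ and generically provide \emph{extra} suppression, so each such term is still controllable by \cref{lm:G,lm:small_r}. The enumeration is lengthy but mechanical, and completing it establishes $\|\Eps(\tau)\|_1\to0$ uniformly in $\tau\in[\tau_0,\tau_1]$, which is \cref{eq:delta3_loss}.
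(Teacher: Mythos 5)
Your overall framework---Duhamel's formula, stochasticity of the discrete generator $\tilde\Gamma$ (which does hold: $\tilde\Gamma^{(1)}_{n,r}=\tilde\Gamma^{(2)}_{n,r}+\tilde\Gamma^{(3)}_{n,r}+\tilde\Gamma^{(4)}_{n,r}$) to control the propagator, and bounding the source $\Eps^{\mathrm{in}}$ uniformly on $[\tau_0,\tau_1]$---is exactly the paper's approach, and the decomposition of the source into a finite-difference residue, the $(1+2r)/N$ rate mismatch, and the $\gamma$-dependent corrections is also the same. Where you diverge is in \emph{how} the $r$-dependent pieces of the source are bounded. The paper does not push them through the $G_{a,b,c}$ machinery after a free $r$-summation; instead it isolates terms of the form
\begin{equation}
A_k=\sum_{n,r}\frac{n^k}{N^k}\left[\PPcont_{n,r}(\tau)-\PPcont_{n-1,r}(\tau)\right]
\end{equation}
and proves a dedicated bound [\cref{eq:bound_this,eq:AktoZero}] showing $|A_k|\to0$, exploiting both the cancellation in the discrete difference and the fact that the shift from $n$ to $n+2r$ costs only a relative factor $1+\OO\bigl((\log N)^2/N^{1-\eps}\bigr)$ by \cref{lm:small_r}. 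This is the genuinely new ingredient compared with the pure-superradiance \cref{lm:continuum_works}, and you do not identify it. Its necessity is easy to see: for a term such as $2r\frac{n}{N}(\PPcont_{n-1,r}-\PPcont_{n,r})$, the naive triangle-inequality bound gives $\sum_{n,r}r\frac{n}{N}\PPcont_{n,r}\sim\gamma\log N$, which does \emph{not} vanish, so the cancellation you relegate to ``mechanical bookkeeping'' is load-bearing and has to be made explicit. Your proposed route---Taylor-expanding $\Pcont_{n+2r}$ and $F_r(\tilde T_n)$ in $n$, using $\partial_n\tilde T_n=\OO((1-n/N)e^{\tau_0-\tau})$, and then reducing to $G_{a,b,c}$ with $r$-moments $\OO((\log N)^k)$---is plausible and would likely close the gap, but as written it understates the work needed and should be carried out explicitly for at least the $2r\frac{n}{N}\Delta\PPcont$ term to confirm that the exponents do remain admissible under \cref{eq:parameters}.
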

\begin{proof}
  We start by defining the error
  \begin{equation}
  	\Eps_{n,r}=\PPap_{n,r}(\tau)-\PPcont_{n,r}(\tau),
  	\label{eq:Eps_nr}
  \end{equation}
  where $\PPcont$ solves the continuum equation and has been defined in \cref{lm:continuum_loss}.
  We know that the norm $| |\Eps| |$ vanishes at $\tau_0$ as $N\to\infty$, as per \cref{lm:QtoR_loss,lm:Rtilde_loss}.
  To bound it for the entire time until $\tau_1$, we consider its time evolution
  \begin{equation}
  	\begin{aligned}
  	  &\dot\Eps_{n,r}
  	   =-\tilde\Gamma^{(1)}_{n,r}\PPap_{n,r}+\tilde\Gamma^{(2)}_{n+1,r}\PPap_{n+1,r}+\tilde \Gamma^{(3)}_{n+2,r-1}\PPap_{n+2,r-1}\\
  	   &-\frac{1}{N}\partial_n[n(N-n)\PPcont_{n,r}(\tau)]-\gamma \frac{n^2}{N^2}[\PPcont_{n,r-1}(\tau)-\PPcont_{n,r}(\tau)]\\
  	   &=-\tilde\Gamma^{(1)}_{n,r}\Eps_{n,r}+\tilde\Gamma^{(2)}_{n+1,r}\Eps_{n+1,r}+\tilde \Gamma^{(3)}_{n+2,r-1}\Eps_{n+2,r-1}+\Eps^{\mathrm{in}}_{n,r}.
  	\end{aligned}
  	\label{eq:Eps_eom}
  \end{equation}
  \Cref{eq:Eps_eom} again allows us to bound the error by the integral over the one-norm of $\Eps^{\mathrm{in}}$, as in \cref{eq:error_bound_Eps}.
  To establish this, we need to bound terms of the form
  \begin{equation}
  	\begin{aligned}
  	  A_k&=\sum_{n,r}\frac{n^k}{N^k}\left[\PPcont_{n,r}(\tau)-\PPcont_{n-1,r}(\tau)\right].
  	\end{aligned}
  	\label{eq:bound_this}
  \end{equation}
  First note that due to the prefactor we can neglect the contribution from the sum from $n=0$ to $n=N^{1-\eps}$.
  For the sum from $n=N^{1-\eps}$ to $N$ we use $n\geq N^{1-\eps}$ and \cref{lm:small_r} to write
  \begin{equation}
  	\frac{n^k}{N^k}\leq\frac{(n+2r)^k}{N^k}<\frac{n^k}{N^k}\left( 1+\frac{2k (\ln N)^2}{N^{1-\eps}} \right).
  	\label{eq:nk_bounds}
  \end{equation}
  Thus we have 
  \begin{equation}
  	|A_k|<\sum_{n,r}\frac{n^k}{N^k}\PPcont_{n,r}(\tau)\frac{2k(\ln N)^2}{N^{1-\eps}}\to0.
  	\label{eq:AktoZero}
  \end{equation}
  With this in mind, we write
  \begin{equation}
  	\begin{aligned}
  	  &\Eps^{\mathrm{in}}_{n,r}=F_r(\tilde T_n)\left\{ \gamma_{n+1}\Pcont_{n+1+2r}-\gamma_n\Pcont_{n+2r}-\frac{1}{N}\partial_n\left[ n(N-n)\Pcont_{n+2r} \right]\right\}\\
  	  &\qquad+2r\frac{n}{N}(\PPcont_{n-1,r}-\PPcont_{n,r})+\frac{2r}{N}\PPcont_{n+1,r}\\
  	  &+\frac{\gamma n}{N}\left[\PPcont_{n+1,r}-\PPcont_{n,r}+\frac{n^2}{N^2}\left( \PPcont_{n,r}-\PPcont_{n+1,r}+\PPcont_{n+2,r-1}-\PPcont_{n,r-1}\right) \right].
  	\end{aligned}
  	\label{eq:Eps_loss_bound}
  \end{equation}
  The first line in \cref{eq:Eps_loss_bound} is the same as for pure superradiance \cref{eq:difference_evolution} and thus vanishes.
  In the second line, the first term is of the form \cref{eq:bound_this}, but with an $r$ out front. Since by \cref{lm:small_r}, $r$ grows only logarithmically with $N$, this contribution still vanishes. The second term in the second line vanishes also by \cref{lm:small_r}.
  The terms proportional to $\gamma $ are all of the form \cref{eq:bound_this} and vanish, too.

\end{proof}

\begin{lemma}[Approximate solution converges]\label{lm:approximation_converges}
  For any $\tau<\alpha\ln N$ and some $\alpha>0$, $\PPap(\tau)$ (\cref{def:Pnrtilde}) converges to the exact solution $\PPex(\tau)$ (\cref{def:Pnr}) as $N\to\infty$.
  \begin{equation}
  	| |\PPex(\tau)-\PPap(\tau)| |_1\Nto0.
  	\label{eq:Ptilde_converges}
  \end{equation}
\end{lemma}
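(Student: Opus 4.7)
The plan is to apply Duhamel's formula to the two generators $\tilde\Gamma$ and $\Gamma$ and reduce the problem to showing that their difference, applied to $\PPap$, has small $1$-norm on the relevant time window. Setting $\Gamma_1 \ce \tilde\Gamma-\Gamma$, we have
\begin{equation}
  \PPap(\tau)-\PPex(\tau)=\int_0^\tau d\tau'\, e^{\Gamma(\tau-\tau')}\Gamma_1\PPap(\tau'),
\end{equation}
and because $e^{\Gamma t}$ is a stochastic matrix (non-expansive in $1$-norm),
\begin{equation}
  ||\PPap(\tau)-\PPex(\tau)||_1\le \alpha(\log N)\cdot \max_{\tau'\le\alpha\log N}||\Gamma_1\PPap(\tau')||_1.
\end{equation}
It therefore suffices to show $||\Gamma_1\PPap(\tau)||_1 = o(1/\log N)$ uniformly on this window.

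Next I would compute the four rate differences explicitly. Direct inspection gives $\tilde\Gamma^{(1)}_{n,r}-\Gamma^{(1)}_{n,r}=-\gamma r/N$, and expanding the rational factors of \cref{eq:rates_main} in powers of $r/N$ yields $|\tilde\Gamma^{(j)}_{n,r}-\Gamma^{(j)}_{n,r}|=O(\gamma(r+1)/N)$ for $j=2,3$. The dropped term $\Gamma^{(4)}_{n,r}$ is $O(\gamma r/N)$ times a bounded quantity. By the triangle inequality, after the standard reindexing of the off-diagonal contributions,
\begin{equation}
  ||\Gamma_1\PPap(\tau)||_1 \le \sum_{n,r}\Bigl(\bigl|\tilde\Gamma^{(1)}_{n,r}-\Gamma^{(1)}_{n,r}\bigr|+\bigl|\tilde\Gamma^{(2)}_{n,r}-\Gamma^{(2)}_{n,r}\bigr|+\bigl|\tilde\Gamma^{(3)}_{n,r}-\Gamma^{(3)}_{n,r}\bigr|+\Gamma^{(4)}_{n,r}\Bigr)\PPap_{n,r}(\tau).
\end{equation}

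To finish, I would use concentration of the $r$-marginal of $\PPap$. Since $\tilde\Gamma^{(4)}=0$ and $\tilde\Gamma^{(3)}_{n,r}=\gamma n^2/N^2\le\gamma$, the $r$-marginal of $\PPap$ is a pure-birth process whose jump rate is bounded by $\gamma$; it is therefore stochastically dominated by a Poisson variable of mean $\gamma\tau$, so that $\sum_{n,r} r\,\PPap_{n,r}(\tau)=O(\gamma\tau)=O(\log N)$ for $\tau\le\alpha\log N$. Together with $||\PPap||_1=1$, this gives $||\Gamma_1\PPap(\tau)||_1=O((\log N)/N)$ and hence $||\PPap-\PPex||_1=O((\log N)^2/N)\to 0$.

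The main obstacle is the algebraic bookkeeping for $\tilde\Gamma^{(2)}_{n,r}-\Gamma^{(2)}_{n,r}$: the $r$-dependent rational denominators in \cref{eq:rates_main} must be expanded carefully enough to extract the claimed $O(\gamma(r+1)/N)$ bound uniformly in $n\in\{0,\dots,\tildeN\}$, including the boundary region $n\approx \tildeN$ where cancellations in the numerator are delicate. Everything else—stochastic domination, Duhamel, and the standard bound on the dropped term $\Gamma^{(4)}$—is routine.
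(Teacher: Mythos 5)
Your proposal is correct and follows the same overall strategy as the paper's proof: Duhamel's formula plus contractivity of $e^{\Gamma t}$ to reduce the problem to bounding $||\Gamma_1\PPap(\tau)||_1$ on the window $\tau\le\alpha\log N$. Where you differ is in the details you fill in, and in two places your version is actually more careful than the paper's terse sketch. First, the paper asserts that all rates in $\Gamma_1$ are $\OO(r/N)$; in fact a direct computation (e.g.\ of $\tilde\Gamma^{(3)}_{n,0}-\Gamma^{(3)}_{n,0}=\gamma n/N^2$ or $\tilde\Gamma^{(2)}_{n,0}-\Gamma^{(2)}_{n,0}=-\gamma n/N^2$) shows there is an $r$-independent $\OO(1/N)$ remainder, so your $\OO(\gamma(r+1)/N)$ is the correct statement; the extra piece still contributes only $\OO(1/N)$ to $||\Gamma_1\PPap||_1$, so the conclusion is unchanged. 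Second, you make explicit the concentration argument the paper only gestures at: since $\tilde\Gamma^{(4)}=0$ the $r$-marginal of $\PPap$ is a pure-birth process with jump rate $\tilde\Gamma^{(3)}_{n,r}=\gamma n^2/N^2\le\gamma$, hence stochastically dominated by a Poisson of mean $\gamma\tau$, giving $\sum_{n,r}r\,\PPap_{n,r}(\tau)=\OO(\log N)$; the paper instead implicitly invokes the tail bound of \cref{lm:small_r}. Both routes land on a bound like $||\PPap-\PPex||_1=\OO((\log N)^2/N)\to0$, matching the paper's $\OO(\log(N)\,N^{\eps-1})$ up to naming of the small exponent.
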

\begin{proof}
  Similar to before, we define the error $\Eps(\tau)=\PPex(\tau)-\PPap(\tau)$.
  We bound its norm by
  \begin{equation}
  	| | \Eps(\tau)| |<\alpha\ln N\max_{\tau\in[0,\alpha\ln N]}| |\Gamma_1\PPap(\tau)| |,
  	\label{eq:eps_ptilde}
  \end{equation}
  where $\Gamma_1$ is the difference of the evolution matrices for $\PPex$ (\cref{def:Pnr}) and $\PPap$ (\cref{def:Pnrtilde}).
  Notably, all rates in $\Gamma_1$ are $\OO(r/N)$ and thus can be bounded by $N^{\eps-1}$ for any $\eps>0$.
  Thus,
  \begin{equation}
  	| | \Eps(\tau)| |<\alpha\ln(N)N^{\eps-1}\to0.
  	\label{eq:eps_ptilde_bound}
  \end{equation}
\end{proof}

\begin{lemma}[$\R^>$ is asymptotically correct.]\label{lm:late_times_loss}
  For times $\tau_1<\tau$, $\R^>$ [\cref{eq:R>}] converges to $\PPex$ (\cref{def:Pnr}),
  \begin{equation}
  	\Delta_5(\tau) = | |\R^>(\tau)-\PPex(\tau)| |\to0.
  	\label{eq:delta4_loss}
  \end{equation}
\end{lemma}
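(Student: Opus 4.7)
The plan is to mirror the proof of \cref{lm:late_times}, adapting it to account for the $r$ degree of freedom. Define $\Gamma_S$ as the generator of \cref{eq:pnr} with the linearized rates \cref{eq:loss_linearized_rates}, so that by \cref{lm:late} one has $\R^>(\tau)=e^{\Gamma_S(\tau-\tau_1)}\R^<(\tau_1)$. Using the triangle inequality together with \cref{lm:Q_loss,lm:QtoR_loss,lm:Rtilde_loss,lm:continuum_works_loss,lm:approximation_converges}, the quantity $||\R^<(\tau_1)-\PPex(\tau_1)||_1$ vanishes as $N\to\infty$; since $e^{\Gamma(\tau-\tau_1)}$ is stochastic and therefore contractive in 1-norm, it suffices to show $||\R^>(\tau)-e^{\Gamma(\tau-\tau_1)}\R^<(\tau_1)||_1\to 0$. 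By Duhamel this reduces to bounding
\begin{equation}
  \int_{\tau_1}^{\tau} d\tau' \bigl|\bigl|\Gamma_1 \R^>(\tau')\bigr|\bigr|_1,\qquad \Gamma_1\ce\Gamma_S-\Gamma,
\end{equation}
exactly as in \cref{eq:late_error}.

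Computing $\Gamma_1$ from \cref{eq:rates_main,eq:loss_linearized_rates}, every entry is either $\OO(n^2/N)$, $\OO(rn/N)$, or $\OO(\gamma n/N)$, so the integrand is small whenever the bulk of $\R^>(\tau')$ lives at small $n$ and at $r\ll N$. This concentration is already valid at $\tau=\tau_1$: by \cref{lm:vanishing_prob_mass,lm:small_r} and the product structure \cref{eq:Rnr<}, up to mass that is smaller than any negative power of $N$, $\R^<(\tau_1)$ is supported on $n\le N^{\mu}$ for any $\mu<1-\delta_1$ and on $r\le N^{\eps}$ for any $\eps>0$. Moreover, the late-time kernel \cref{eq:R>tau2} factorises as a product of independent binomial-decay kernels acting on $n$ (rate $1$) and on $r$ (rate $\gamma/N$), which preserves and in fact strengthens this concentration for all $\tau\ge\tau_1$: the $n$-marginal contracts exponentially toward $n=0$ on an $\OO(1)$ timescale, while the $r$-marginal stays essentially frozen for $\tau-\tau_1=o(N)$ and only slowly drifts downward thereafter. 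Consequently one expects the pointwise bound $||\Gamma_1\R^>(\tau')||_1\le C(N^{2\mu-1}+\gamma N^{\mu+\eps-1})$ on the bulk support, which vanishes once $\mu$ and $\eps$ are chosen small enough.

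The main obstacle, as in \cref{lm:late_times}, will be to convert this pointwise control into a norm bound that survives the inner sum $\sum_{n,r}\bigl|[\Gamma_1\R^>]_{n,r}\bigr|$. The trick is to repeat the argument of \cref{eq:square_bracket_bound,eq:gn_sum}: after expanding $\R^>$ via \cref{eq:R>tau2}, one verifies that the relevant square brackets in the $n$-direction are positive, so the absolute values can be dropped, producing a telescoping cancellation that leaves only a boundary term at $n=\tildeN$ analogous to $g_n$ in \cref{eq:gn}, plus an analogous boundary term at $r=N/2$ coming from the Poisson-like $r$-kernel. Both boundary contributions are exponentially suppressed by \cref{lm:vanishing_prob_mass,lm:small_r}, which guarantee that $\R^<(\tau_1)$ is superpolynomially small at the extremes of $n$ and $r$. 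A final integration in time in the spirit of \cref{eq:delta4_integral} then converts these bounds into inverse-polynomial factors of $N$, so that $\Delta_5(\tau)\to 0$ uniformly for $\tau\ge\tau_1$. The delicate point to handle carefully is the mixed term $\Gamma_{1}^{(3)}-\tilde\Gamma^{(3)}$ involving both $n$ and $r$, since it is the one that couples the two directions and is the principal obstruction to a clean product-of-kernels argument; it will require using $r\le N^\eps$ \emph{inside} the telescoping in $n$.
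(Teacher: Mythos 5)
Your plan tracks the paper's argument closely: Duhamel plus contractivity of the exact propagator to reduce the problem to $\int_{\tau_1}^\tau ||\Gamma_1\R^>(\tau')||_1\,d\tau'$, followed by a sign/telescoping argument that collapses the inner sum to boundary terms which are then controlled by the superpolynomial smallness of $\R^<(\tau_1)$ at the extremes and a final time integration. Three points to adjust. First, you single out the $\Gamma^{(3)}$ rate (which couples $n$ and $r$) as the ``principal obstruction,'' but since $\Gamma^{(3),\mathrm{late}}_{n,r}=0$, its contribution to $\Gamma_1$ is just $-\Gamma^{(3)}_{n,r}=\OO(\gamma n^2/N^2)$, which is subleading by a factor $\gamma/N$ relative to the dominant $\OO(n^2/N)$ collective-decay piece. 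The paper simply discards it (and every other $\gamma$-proportional contribution) \emph{before} telescoping, retaining only $\bigl|\tfrac{n(n+1)}{N}\R^>_{n+1,r}-\tfrac{n^2}{N}\R^>_{n,r}\bigr|$, so there is no delicate $r$-coupling to handle inside the $n$-telescope. Second, the stated time-independent pointwise bound $||\Gamma_1\R^>(\tau')||_1\le C(N^{2\mu-1}+\gamma N^{\mu+\eps-1})$ cannot by itself close the argument, because the Duhamel integral runs over an unbounded interval; what makes the integral converge is the explicit exponential decay $e^{-N(\tau-\tau_1)}$ of the surviving boundary term (the analogue of \cref{eq:bound2409,eq:delta4_integral} in the pure case), and you must carry that factor rather than absorb it into a constant. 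Third, your concentration claim for $\R^<(\tau_1)$ at small $n$ is correct but cites the wrong lemma: \cref{lm:vanishing_prob_mass} controls mass near small $n$ at \emph{early} times ($\tau<\tau_0$), whereas what is needed at $\tau=\tau_1$ is the tail bound at large $n$ analogous to \cref{eq:prob_mass_3}.
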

\begin{proof}
  From the previous sections, we know that $\R_{n,r}(\tau_1)$ is a good approximation with the error vanishing.
  As before, we can bound $\Delta_5$ by
  \begin{equation}
  	\begin{aligned}
  	  \Delta_5(\tau)
  	  \leq\int_{\tau_1}^\tau d\tau' ||\Gamma_1 \R^>(\tau')||,
  	\end{aligned}
  	\label{eq:late_error_loss}
  \end{equation}
  where $\Gamma_1$ is the difference between the linearized evolution matrix (\cref{lm:late}) and the full evolution (\cref{def:Pnr}).

To bound the integral, we first evaluate the argument
\begin{equation}
  \begin{aligned}
  	&||\Gamma_1\R^>(\tau)|| =
  	\sum_{r=0}^{N/2}\sum_{n=0}^{N-2r}
  \left|-\left[ \frac{n}{N}(1-n-2r)+\frac{\gamma n}{N} \right]\R^>_{n,r}(\tau)\right.\\
&+(n+1)\left[ -\frac{n+2r}{N}+\frac{\gamma (N+2)(N-2r-n+1)}{N(N-2r)N-2r+2)} \right]\R^>_{n+1,r}(\tau)\\
&+\frac{\gamma }{N}\frac{(n+2)(n+1)(N-r+2)}{(N-2r+2)(N-2r+3)}\R^>_{n+2,r-1}\\
&\left.+\frac{\gamma }{N}(r+1)\left[ \frac{(N-2r-n-1)(N-2r-n)}{(N-2r)(N-2r-1)}-1 \right]\R^>_{n,r+1}(\tau)\right|.
  \end{aligned}
  \label{eq:late_argument_loss}
\end{equation}
First we note that subleading terms can be neglected, as even bounding them individually leads to a vanishing contribution. In particular, since we can bound terms containing $n/N$ by $N^{-\delta_1}e^{-\tau-\tau_1}$, all such terms can be neglected.
We are left with
\begin{equation}
  \begin{aligned}
  	&||\Gamma_1\R^>(\tau)|| =
  	\sum_{r=0}^{N/2}\sum_{n=0}^{N-2r-1}
  	\left| \frac{n(n+1)}{N}\R^>_{n+1,r}(\tau)-\frac{n^2}{N}\R^>_{n,r}(\tau)\right|,
  \end{aligned}
  \label{eq:late_argument_loss_approx}
\end{equation}
where we have also dropped the vanishing term
\begin{equation}
  \sum_{r=0}^{N/2}\frac{(N-2r)^2}{N}\R^>_{N-2r,r}(\tau)\to0.
  \label{eq:extra_term_that_vanishes}
\end{equation}

Similar to \cref{lm:late_times}, we can write \cref{eq:late_argument_loss_approx} as
\begin{equation}
  \begin{aligned}
  	&||\Gamma_1\R^>(\tau)|| =
  	\sum_{r=0}^{N/2}\sum_{n=0}^{N-2r-1}
  	\frac{n}{N}\left|\sum_{q=r}^{N/2}\sum_{m=n}^{N-2r'-1}
  \mat{q\\r}\mat{m\\n}e^{-\gamma q(\tau-\tau_1)/N}\right.
  	\\
  	&\times e^{-m(\tau-\tau_1)}(e^{\gamma (\tau-\tau_1)/N}-1)^{q-r}
  	(e^{\tau-\tau_1}-1)^{m-n}\\
  	&\times	\left.\left[ (m+1)R_{m+1}(\tau_1)F_q(T_{m+1}(\tau_1))-(n-1)R_m(\tau_1)F_q(T_m(\tau_1)) \right]\right|
  \end{aligned}
  \label{eq:late_argument_loss_approx2}
\end{equation}
The term in square brackets is again always positive, so we drop the absolute value signs.
To show this, consider
\begin{equation}
  \begin{aligned}
  	&(m+1)R_{m+1}F_{q,m+1}-(n-1)R_mF_{q,m}
  	=R_{m+1}F_{q,m+1}\\
  	&\times\left\{ 2+(n-1)\left[ 1-\left(1+\frac{1}{m}\right)\exp\left( -\frac{N^2e^{-\tau_1}}{m(m+1)} \right)\left(\frac{T_m(\tau_1)}{T_{m+1}(\tau_1)}\right)^q \right] \right\}.
  \end{aligned}
  \label{eq:square_brackets_positive}
\end{equation}
In the regime $m<N^\mu$ for some $\mu<\delta_1/2$ the exponential makes the negative term vanish.
In contrast, if $m>N^\mu$, the fraction $T_m(\tau_1)/T_{m+1}(\tau_1)$ is smaller than 1, such that we can bound it by 1 and use the result \cref{eq:square_bracket_bound}.

Dropping the absolute value signs in \cref{eq:late_argument_loss_approx} means that only the boundary terms survive (up to extra terms that vanish even faster)
\begin{equation}
  \begin{aligned}
  	&||\Gamma_1\R^>(\tau)|| =
  	\sum_{r=0}^{N/2}
  	\frac{(N-2r)^2}{N}\R^>_{N-2r,r}(\tau)<N^{-\delta_1}e^{-N(\tau-\tau_1)}.
  \end{aligned}
  \label{eq:late_argument_loss_approx3}
\end{equation}
Integrating this from time $\tau_1$ to $\infty$ yields an upper bound to the error that goes to zero as $N\to\infty$.

\end{proof}
Note that for this to work it is crucial that we include the decay of $r$ in \cref{eq:loss_linearized_rates}. If we had not included this term in the linearized rates, the distribution in $r$ would remain stationary rather than decay and lead to an error growing in time.

\bibliography{library}
\end{document}